\newtheorem{definition}{Definition}
\newtheorem{proposition}{Proposition}
\newtheorem{example}{Example}
\newtheorem{corollary}{Corollary}
\newtheorem{theorem}{Theorem}
\newtheorem{lemma}{Lemma}
\newcommand{\figref}[1]{Figure~\ref{#1}}
\newcommand{\thmref}[1]{Theorem~\ref{#1}}
\newcommand{\defref}[1]{Definition~\ref{#1}}
\newcommand{\exref}[1]{Example~\ref{#1}}
\newcommand{\ie}{i.e.,\xspace}
\newsavebox{\ffbox}\newlength{\ffboxlen}
\newcommand{\todo}[1]{%
  {\sbox{\ffbox}{\textbf{TODO:}\ \textit{{#1}}\ \textbf{:ODOT}}
    \settowidth{\ffboxlen}{\usebox{\ffbox}}
		\addtolength{\ffboxlen}{-5mm}
    \ifthenelse{\ffboxlen>\linewidth}{%
      \noindent\marginpar{$>>>>$}\textbf{TODO:}\ \textit{{#1}}\ \textbf{:ODOT}\marginpar{$<<<<$}}{%
      \noindent\marginpar{$>><<$}\textbf{TODO:}\ \textit{{#1}}\ \textbf{:ODOT}}}}
\newcommand{\naturals}{{{\mathbb{N}}}}
\newcommand{\np}{{\mathrm{NP}}}
\newcommand{\vecw}{{\mathbf{w}}}
\newtheorem*{rep@theorem}{\rep@title}
\newcommand{\newreptheorem}[2]{%
\newenvironment{rep#1}[1]{%
 \def\rep@title{#2 \ref{##1}}%
 \begin{rep@theorem}}%
 {\end{rep@theorem}}}
\title{Proportional Rankings}
\author{Piotr Skowron \and Martin Lackner \and Markus Brill \and Dominik Peters \and Edith Elkind}
\author{Piotr Skowron, Martin Lackner, Markus Brill, Dominik Peters, and Edith Elkind\\
Department of Computer Science\\
University of Oxford}
\date{}
\begin{document}

\maketitle

\begin{abstract}
In this paper we extend the principle of proportional representation to rankings.
We consider the setting where alternatives need to be ranked based on approval preferences.
In this setting, proportional representation requires that cohesive groups of voters are represented proportionally in each initial segment of the ranking.
Proportional rankings are desirable in situations where initial segments of different lengths may be relevant, e.g., hiring decisions (if it is unclear how many positions are to be filled), the presentation of competing proposals on a liquid democracy platform (if it is unclear how many proposals participants are taking into consideration), or recommender systems (if a ranking has to accommodate different user types).
We study the proportional representation provided by several ranking methods and prove theoretical guarantees. Furthermore, we experimentally evaluate these methods and present preliminary evidence as to which methods are most suitable for producing proportional rankings.
\end{abstract}

\section{Introduction}
\label{sec:introduction}

Consider a population with dichotomous (approval) preferences over a set of 300 alternatives. Assume that 50\% of the population approves of the first 100 alternatives, 30\% of the next 100 alternatives, and 20\% of the last 100 ones. Imagine that we want to obtain a ranking of the alternatives that in some sense reflects the preferences of the population. Perhaps the most straightforward approach is to use \emph{Approval Voting}, i.e., to rank the alternatives from the most frequently approved to the least frequently approved. A striking property of the resulting ranking is that half of the population does not approve any of the first 100 alternatives in the ranking.

In many scenarios, rankings obtained by applying Approval Voting are highly unsatisfactory; rather, it is desirable to interleave alternatives supported by different (sufficiently large) groups. For instance, consider an anonymous user using a search engine to find results for the query ``Armstrong''. Even if 50\% of the users performing this search would like to see results for Neil Armstrong, 30\% for Lance Armstrong, and 20\% for Louis Armstrong, it is not desirable to put only results referring to Neil Armstrong in the top part of the ranking shown on the first page; rather, results related to each of the Armstrongs should be displayed in appropriately high positions. 

There are numerous applications where such diversity within collective rankings is desirable. For instance, consider recommendation systems which aim at accommodating different types of users (the estimated preferences of these user types should be represented proportionally to their likelihood), a human-resource department providing hiring recommendations when it is unclear how many positions are to be filled, or committee elections with some additional structure (e.g., when we want to elect a committee with a chairman and vice-chairman, as well as substitute members). Another example application in which diversity in rankings is relevant is \emph{liquid democracy}~\citep{BKNS14a}. A defining feature of a liquid democracy is that all participants are allowed---and encouraged---to contribute to the decision making process. In particular, in a context where one (or more) out of several competing alternatives needs to be selected, each participant can propose their own alternative if they are not satisfied with the existing ones. This may lead to situations where a very large number of proposals needs to be considered. Since it cannot be expected that every participant studies all available alternatives before making a decision, the order in which competing alternatives are presented plays a crucial role \citep[][Chapter 4.10]{BKNS14a}.

Indeed, the idea of diversified rankings with respect to the preferences of a population appears in several literatures. For example, in the context of search engines, this aim is often referred to as \emph{diversifying search results}~\citep{conf/www/WelchCO11,journals/ftir/SantosMO15,conf/websci/KingraniLZ15,conf/waim/WangLY16}. 
There are also models which incorporate this idea into online advertising (see the work of \citealp{Hu:2011:CSI:1963405.1963412}, and the references therein). 
In the context of liquid democracy, \citeauthor{BKNS14a} observe that using AV gives rise to what they call the ``noisy minorities'' problem: relatively small groups of very active participants can ``flood'' the system with their contributions, creating the impression that their opinion is much more popular than it actually is. This is problematic insofar as other alternatives (that are potentially much more popular) run the risk of being ``buried'' and not getting sufficient exposure. \citeauthor{BKNS14a} suggest that in order to prevent this problem, the ranking mechanism needs to ensure that the order adequately reflects the opinions of the participants.

In this paper, we propose an abstract model applicable to all these applications, and initiate a formal axiomatic study of the problem of finding a proportional collective ranking. In our study, we use tools from the political and social sciences, and in particular we adapt the concept of \emph{proportional representation} \citep{BaYo82a,Monr95a} 
to the case of rankings. Informally, proportional representation requires that the extent to which a particular preference or opinion is represented in the outcome should be proportional to the frequency with which this preference or opinion occurs within the population. For instance, proportional representation  is often a requirement in the context of \emph{parliamentary elections} \citep{Puke14a}, where candidates are grouped into political parties and voters express preferences over parties. If a party receives, say, 20\% of the votes, then proportional representation  requires that this party should be allocated (roughly) 20\% of the parliamentary seats \citep[see][for a discussion of arguments for and against proportional representation in a political context]{Lasl12a}.

The concept of proportional representation can be extended to the case of rankings in a very natural way. Intuitively, we say that a ranking $\tau$ is proportional if each prefix of $\tau$, viewed as a subset of alternatives, satisfies (some form of) proportional representation. We will require that, for each ``sufficiently large'' group of voters with consistent preferences, a proportional number of alternatives approved by this group is ranked appropriately high in the ranking. The position of such alternatives in the ranking depends on the level of their support, indicated by the size and the cohesiveness of the corresponding group of voters.

\paragraph{Our Contribution.}
The contribution of this paper is as follows:
\begin{inparaenum}[(i)]
\item We formalize the concept of proportionality of a ranking and introduce a quantitative way measuring it,
\item we observe that several known multiwinner rules satisfying \emph{committee monotonicity} can be viewed as rankings rules, 
\item we provide theoretical bounds on the proportional representation of several ranking rules, and
\item we experimentally evaluate ranking rules with respect to our measures of proportionality.
\end{inparaenum}

\paragraph{Related Work}

Proportional Representation is traditionally studied in settings where a \emph{subset} of alternatives (such as a parliament or a committee) needs to be chosen. The setting that is most often studied is that of \emph{closed party lists}, where alternatives are grouped into pairwise disjoint parties and voters are restricted to select a single party \citep{Gall91a}. In this setting, providing proportional representation reduces to solving an \emph{apportionment} problem \citep{BaYo82a,Puke14a}. 
In an influential paper, \citet{Monr95a} generalized the concept of proportional representation to settings where voter preferences are given as rank-orderings over the set of all alternatives. 
For approval preferences, concepts capturing proportional representation have recently been introduced by \citet{ABC+15a} and \citet{SFFB16a}.

The setting considered in this paper differs from all of the above settings in that we are interested in \emph{ranking} the alternatives, rather than choosing a subset of them. To the best of our knowledge, proportional rankings based on approval preferences have not been considered in the literature. In the context of linear (\ie rank-order) preferences, proportional rankings are discussed by \citet{Schu11b}. However, this paper neither proposes a measure of proportionality, nor does it compare the proportionality provided by different rules.  


\section{Preliminaries}
\label{sec:prelims}
For $s\in{\mathbb N}$, we write $[s] = \{1, \dots, s\}$. For each set $X$, we let $\mathcal{S}(X)$ and $\mathcal{S}_k(X)$ denote the set of all subsets of $X$, and the set of all $k$-element subsets of $X$, respectively.

Let $N=[n]$ be a finite set of {\em voters} and $A = \{c_1, \ldots c_m\}$ a finite set of $m$ {\em alternatives}. 
Each voter $i\in N$ approves a non-empty subset of alternatives 
$A_i\subseteq A$. For each $a\in A$, we write $N_a$ 
for the set of voters who approve $a$, i.e., $N_a=\{i\in N: a \in A_i\}$. 
We refer to $|N_a|$ as the \emph{approval score} of $a$. 
A list $P = (A_1,\ldots, A_n)$ of approval sets, one for each voter $i \in N$, is called a \emph{profile on $(A,N)$}.

A {\em ranking} is a linear order over~$A$.
For a ranking $r$ and for $k\in[m]$, we denote the
$k$-th element in $r$ by $r_k$. Thus, $r$ can be represented by the list $(r_1, r_2, \ldots, r_m)$.
Given a ranking $r = (r_1, \dots, r_m)$ and $k\in[m]$, 
we let $r_{\le k}=\{r_1, \dots, r_k\}$ denote the subset of $A$ consisting of the top $k$ elements according to $r$. 
An {\em (approval-based) ranking rule} $f$ maps a profile $P$ on $(A,N)$ to a ranking $f(P)$ over $A$. 

In what follows, we consider several ranking rules that can be obtained
by adapting existing multiwinner rules.
An \emph{(approval-based) multiwinner rule} takes as input a profile $P$ on $(A,N)$ and an integer $k \in [m]$ and outputs a $k$-element subset of $A$, referred to as the \emph{winning committee}.
A generic adaptation of multiwinner rules to ranking rules is possible whenever the respective 
multiwinner rule $\mathcal{R}$ has the property that $\mathcal{R}(P, k-1)\subseteq \mathcal{R}(P,k)$
for all $k\le m$ (this property is known as {\em committee monotonicity} or {\em house monotonicity}): whenever this is the case, 
we can produce a ranking by placing the unique alternative in 
$\mathcal{R}(P, k)\setminus \mathcal{R}(P, k-1)$ in position $k$. 

\medskip
\noindent
We consider the following ranking rules.\footnote{All rules that we describe may have to break ties at some point in the execution;
we adopt an adversarial approach to tie-breaking, i.e., we say that a ranking rule
satisfies a property only if it satisfies it for all possible ways of breaking~ties.} 
Fix a profile $P$ on $(A,N)$. 
\begin{description}
\item[Approval Voting (AV).] Approval Voting ranks the alternatives in order of their approval score, so that $|N_{r_1}| \ge \dots \ge |N_{r_m}|$.

\item[Reweighted Approval Voting (RAV).] This family of rules is based on ideas developed by Danish polymath Thorvald Thiele \citep{Thie95a}. It is parameterized by {\em weight vectors}, \ie sequences $(w_1, w_2, \dots)$ of nonnegative reals. 
For a given weight vector $\vecw=(w_1, w_2, \dots)$ and a subset $S \subseteq A$ of alternatives, define
the {\em $\vecw$-RAV score of $S$} as $w(S) = \sum_{i \in N} \sum_{j=1}^{|A_i \cap S|} w_j$. The intuition behind this score is that voters prefer sets $S$ that contain more of their approved alternatives, but that there are decreasing marginal returns to adding further such alternatives to $S$ (if $\vecw$ is decreasing).
The rule $\vecw$-RAV constructs a ranking iteratively, starting 
with the empty partial ranking $r=()$. In step $k \in [m]$,
it appends to $r$ an alternative $a$ with maximum {\em marginal contribution} 
$w(r_{\le k-1}\cup \{a\}) - w(r_{\le k-1})$ among all yet unranked alternatives.
Many interesting rules belong to this family for suitable 
weight vectors $\vecw$. For example, AV is simply $(1,1,1,\ldots)$-RAV, 
\emph{Sequential Proportional Approval Voting (SeqPAV)} is defined as $\vecw_{\text{PAV}}$-RAV, 
where $\vecw_{\text{PAV}} = (1,\nicefrac{1}2,\nicefrac{1}3,\ldots)$,
\emph{Greedy Chamberlin--Courant} is defined as 
$(1,0,0,\ldots)$-RAV, and for every $p > 1$, 
the \emph{$p$-geometric rule} is given by $(\nicefrac{1}{p},\nicefrac{1}{p^2},\nicefrac{1}{p^3},\ldots)$-RAV. 

\item[Reverse SeqPAV.] This rule is a bottom-up variant of 
SeqPAV. It builds a ranking starting with the lowest-ranked alternative. 
Initially it sets $S=A$ and $r=()$. Then, 
at each step it picks an alternative $a$ minimizing 
$w_{\text{PAV}}(S) - w_{\text{PAV}}(S\setminus \{a\})$, 
removes it from $S$ and prepends it to $r$.

\item[Phragm\'{e}n's Rule.] \citet{Phra95a} proposed a committee selection rule based on a load balancing approach: every alternative incurs a \emph{load} of one unit, and the load of alternative~$a$ has to be distributed among all voters in $N_a$. Phragm\'{e}n's rule constructs a ranking iteratively, starting with the empty partial ranking $r=()$. Initially, the load of each voter is~$0$. At each step, the rule picks a yet unranked alternative and distributes its associated load of $1$ over the voters who approve it; the alternative and the load distribution scheme are chosen so as to minimize the maximum load across all voters. This alternative is then appended to the ranking $r$. (For details, see the work of \citealp{Jans12a}, and \citealp{MoOl15a}).
\end{description}

The following example illustrates the ranking rules defined above.

\begin{figure}[tb]
	\centering
\begin{tikzpicture}[yscale=.45,xscale=.1]
	\draw (0,7) rectangle node {$c_1$} +(16.7,1); \draw (16.7,7) rectangle node {$c_2$} +(20.0,1); \draw (36.7,7) rectangle node {$c_3$} +(20.0,1); \draw (56.7,7) rectangle node {$c_4$} +(20.0,1);
	\draw (0,6) rectangle node {$c_1$} +(16.7,1); \draw (16.7,6) rectangle node {$c_2$} +(20.0,1); \draw (36.7,6) rectangle node {$c_3$} +(20.0,1); \draw (56.7,6) rectangle node {$c_4$} +(20.0,1);
	\draw (0,5) rectangle node {$c_1$} +(16.7,1); \draw (16.7,5) rectangle node {$c_2$} +(20.0,1); \draw (36.7,5) rectangle node {$c_3$} +(20.0,1); \draw (56.7,5) rectangle node {$c_4$} +(20.0,1);
	\draw (0,4) rectangle node {$c_6$} +(58.3,1); \draw (58.3,4) rectangle node {$c_5$} +(45.6,1);
	\draw (0,3) rectangle node {$c_1$} +(16.7,1); \draw (16.7,3) rectangle node {$c_2$} +(20.0,1); \draw (36.7,3) rectangle node {$c_3$} +(20.0,1); \draw (56.7,3) rectangle node {$c_4$} +(20.0,1); \draw (76.7,3) rectangle node {$c_5$} +(27.2,1);
	\draw (0,2) rectangle node {$c_1$} +(16.7,1); \draw (16.7,2) rectangle node {$c_2$} +(20.0,1);  \draw (36.7,2) rectangle node {$c_3$} +(20.0,1); \draw (56.7,2) rectangle node {$c_4$} +(20.0,1); \draw (76.7,2) rectangle node {$c_5$} +(27.2,1);
	\draw (0,1) rectangle node {$c_1$} +(16.7,1); \draw (16.7,1) rectangle node {$c_6$} +(41.6,1);

	\draw (16.7,8.8) node[]{$\frac{1}{6}$}; \draw (36.7,8.8) node[]{$\frac{11}{30}$}; \draw (56.7,8.8) node[]{$\frac{17}{30}$}; \draw (76.7,8.8) node[]{$\frac{23}{30}$};
	
	\draw (103.9,5.8) node[]{$\frac{187}{180}$};
	
	\draw (58.3,0.2) node[]{$\frac{7}{12}$};
	
	\draw (-0.5,7.5) node[left]{$A_1 = \{c_1,c_2,c_3,c_4\}$};
	\draw (-0.5,6.5) node[left]{$A_2 = \{c_1,c_2,c_3,c_4\}$};
	\draw (-0.5,5.5) node[left]{$A_3 = \{c_1,c_2,c_3,c_4\}$};
	\draw (-0.5,4.5) node[left]{$A_4 = \{c_5,c_6\}$};
	\draw (-0.5,3.5) node[left]{$A_5 = \{c_1,c_2,c_3,c_4,c_5 \}$};
	\draw (-0.5,2.5) node[left]{$A_6 = \{c_1,c_2,c_3,c_4,c_5 \}$};
	\draw (-0.5,1.5) node[left]{$A_7 = \{c_1,c_6 \}$};
\end{tikzpicture}
\caption{Illustration of the load distribution produced by Phragm\'{e}n's rule in \exref{ex:owaBasedRules}.}
\label{fig:ex-phragmen}
\end{figure}

\begin{example}\label{ex:owaBasedRules}
{\em
Consider the following profile with 7 voters and 6 alternatives:
\begin{align*}
& A_1 = A_2 = A_3 = \{c_1, c_2, c_3, c_4\} \quad & & A_4 = \{c_5, c_6\} \\
& A_5 = A_6 = \{c_1, c_2, c_3, c_4, c_5\} \quad & & A_7 = \{c_1, c_6\}
\end{align*}
Assume lexicographic tie-breaking. Since the approval scores of the alternatives $c_1, \ldots, c_6$ are equal to 6, 5, 5, 5, 3, and 2, respectively,
Approval Voting returns the ranking $(c_1, c_2, c_3, c_4, c_5, c_6)$. 

SeqPAV in the first iteration selects an alternative with the highest approval score, i.e., $c_1$. In the second iteration, the marginal contribution of alternatives $c_2, \ldots, c_6$ to the PAV-score are equal to $\nicefrac{5}{2}$, $\nicefrac{5}{2}$, $\nicefrac{5}{2}$, $2$, and $\nicefrac{3}{2}$, respectively. Thus, SeqPAV selects $c_2$. In the next iterations, the rule appends $c_3$, $c_5$, $c_4$, and finally $c_6$ to the ranking.

Let us now consider Reverse SeqPAV. Removing alternative $c_1$ from $\{c_1, \ldots, c_6\}$ decreases the PAV-score of the voters by $3\cdot \frac{1}{4} + 2\cdot \frac{1}{5} + \frac{1}{2} = 1.65$. Removing alternatives $c_2, \ldots, c_6$ from $\{c_1, \ldots, c_6\}$ decreases the PAV-score of the voters by $1.5$, $1.5$, $1.5$, $0.9$ and $1.0$, repectively. Thus, alternative $c_5$ is put in the last position of the ranking produced by Reverse SeqPAV. The whole ranking returned by this rule is $(c_1, c_2, c_3, c_6, c_4, c_5)$.

The reader can easily verify that the $2$-geometric rule and the Greedy Chamberlin--Courant rule return the rankings $(c_1, c_5, c_2, c_3, c_4, c_6)$ and $(c_1, c_2, c_5, c_3, c_6, c_4)$, respectively.

Finally, Phragm\'{e}n's rule returns the ranking $(c_1,c_2,c_3,c_6,c_4,c_5)$. The corresponding load distribution is illustrated in \figref{fig:ex-phragmen}.
}
\end{example}

\section{Measures of Proportionality}

\newcommand{\gr}{\cal{GR}\xspace}
\newcommand{\agr}{$\alpha$-\cal{GR}\xspace}
\newcommand{\akgr}{$\alpha(k,\ell)$-\cal{GR}\xspace}

\newcommand{\sigi}{significant\xspace}
\newcommand{\avg}{\mathrm{avg}}
\newcommand{\jd}{\mathrm{jd}}
\newcommand{\jar}{\jd}

\newcommand{\pro}{\alpha}
\newcommand{\coh}{\lambda}

In this section, we define a measure of proportionality for rankings
and then extend it to ranking rules.
In what follows, let $P$ be a profile on $(A,N)$ with $|A|=m$. 

Given a group of voters $N'\subseteq N$ and a set of alternatives $S\subseteq A$,
a natural measure of the group's ``satisfaction'' provided by $S$ is the average number
of alternatives in $S$ that are approved by a voter in $N'$. Thus, we define
\[ 
\avg(N',S) = \frac{1}{|N'|} \sum_{i \in N'} |A_i \cap S| \text.
\]
We refer to $\avg(N',S)$ as the \emph{average representation of $N'$ with respect to $S$}.
To extend this idea to rankings, we consider
the case where the subset $S$ is an initial segment of a given ranking $r$,
\ie $S=r_{\le k}$ for some $k\in [m]$. 
Intuitively, every 
group $N'$ wants to have an average representation $\avg(N', r_{\le k})$ 
that is as large as possible, for all $k\in [m]$.
Now, whether a group of voters deserves to be represented in the top positions 
of a ranking depends on two parameters: 
its relative size and its cohesiveness, i.e., 
the number of alternatives that are unanimously approved by the group members. 
This motivates the following definition.

\begin{definition}{\bf (Significant groups)}
Consider a profile $P$ on $(A,N)$.
For a group $N'\subseteq N$, its \emph{proportion $\pro(N')$} is given by 
$\pro(N') = \frac{|N'|}{|N|}$, 
and its \emph{cohesiveness $\coh(N')$} is given by 
$\coh(N') = |\bigcap_{i \in N'} A_i|$.
Given $\alpha\in(0, 1]$ and $\lambda\in[m]$,
we say that $N'$ is {\em $(\alpha, \lambda)$-significant in $P$} if 
$|N'|=\lceil\alpha n\rceil$ and $\coh(N')\ge \lambda$.
\end{definition}

The following definition captures the intuitively compelling idea that a group can demand to be proportionally 
represented in the top positions of the ranking, as long as the number of demanded alternatives 
does not exceed the cohesiveness of the group.

\begin{definition} {\bf (Justifiable demand)}
The \emph{justifiable demand} of a group $N'\subseteq N$ with respect to the top $k$ positions of a ranking is defined as 
\[ 
\jar(N',k)=\min( \lfloor \pro(N') \cdot k \rfloor, \coh(N')) \text. 
\]
\end{definition}

For example, if a group contains 25\% of the voters and has a cohesiveness of 3, it has a justifiable demand of 1 
with respect to the top four positions, a justifiable demand of 2 with respect to the top eight positions, and a 
justifiable demand of 3 with respect to the top twelve positions, which is also its maximum justifiable demand.

It would be desirable to find a rule that provides every group with an average representation that meets the group's justifiable demand. However, the following example shows that this is not always possible.

\begin{example}\label{ex:lower_bound}
{\em
Let $A = \{a,b,c\}$ and $n=6$ and consider the profile $P$ given by 
$A_1=\{a\}$, $A_2=\{a,b\}$, $A_3=\{b\}$, $A_4=\{b,c\}$, $A_5=\{c\}$, and $A_6=\{a,c\}$. 
Consider the ranking $r=(a,b,c)$. The group $N'=\{4,5,6\}$ has $\pro(N')=\nicefrac{1}{2}$, $\coh(N')=1$. 
Therefore, its justifiable demand with respect to the top two positions is 
$\jd(N',2) = \min(\lfloor \nicefrac{1}{2} \cdot 2 \rfloor, 1)=1$.
However, its average representation with respect to the top two positions 
of $r$ is only  $\avg(N',\{a,b\})=\nicefrac{2}{3}$.
Since $P$ is completely symmetric with respect to the alternatives, 
we can find such a group for every other ranking as well. 
\label{ex:1/2isbest}
}
\end{example}

This example shows that it may not be feasible to provide each group 
with the level of representation that meets its justifiable 
demand, but it might be possible to guarantee a large fraction of it. For instance, in the previous example it is 
possible to ensure that $\avg(N',k)\ge \nicefrac{2}{3}\cdot \jar(N',k)$ for all groups $N'$ and for all $k\le 3$.
This observation leads to the following definition.

\begin{definition}\label{def:opt} {\bf (Optimal ranking)}
We define the {\em quality} of a ranking $r$ for a profile $P$ as
\[ 
q_P(r) = \min_{\substack{k\in [m], N'\subseteq N:\\ \jar(N', k)>0}} \frac{\avg(N', r_{\le k})}{\jar(N',k)} \; \text.
\]
An {\em optimal ranking for $P$} is a ranking in $\arg\max_{r}q_P(r)$.
\end{definition}
We observe that $q_P(r)$ is well-defined, i.e., the set of pairs $(N', k)$
such that $\jar(N', k)>0$ is always non-empty. Indeed, consider an alternative 
$a$ with the highest approval score: as $A_i\neq\emptyset$ for all $i\in N$,
by the pigeonhole principle we have $|N_a|\ge\nicefrac{n}{m}$. Moreover, $\coh(N_a)\ge 1$,
so $\jar(N_a, m)\ge\min(\frac{\nicefrac{n}{m}}{n}\cdot m, 1)=1$.
Example~\ref{ex:1/2isbest} illustrates
that the quality of an optimal ranking may be less than $1$. 

Unfortunately, good rankings are hard to compute.

\begin{theorem}
Given a profile $P$, it is $\np$-hard to decide whether there exists a ranking $r$ with
$q_P(r) \ge 1$. \label{thm:optimal-hard}
\end{theorem}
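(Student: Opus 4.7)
The plan is to exhibit a polynomial-time reduction from a standard NP-hard problem, such as \textsc{Independent Set} or \textsc{Exact Cover by 3-Sets}, to the decision problem of checking whether $q_P(r) \geq 1$ is achievable. The strategy is to encode the combinatorial structure of the source instance into the approval profile so that any ranking $r$ with $q_P(r) \geq 1$ is forced to have its top segment correspond to a feasible solution of the source instance.

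First, I would map the source instance into $(A, N)$ as follows. Introduce one \emph{core} alternative per object of the source instance (for \textsc{Independent Set} on $G=(V,E)$ with target $t$, one alternative per $v \in V$; for \textsc{X3C}, one per universe element) together with a pool of \emph{padding} alternatives intended to occupy the bottom of the ranking. Voters are introduced in two families: \emph{structure voters}, whose approval sets create cohesive subgroups that mirror the combinatorial constraints of the source instance (for example, voters whose common approval sets are incompatible with two adjacent core alternatives being ranked high, or voters forcing certain elements to be ``covered'' by the top positions), and \emph{padding voters} used to calibrate $|N|$ and the per-group proportions $\alpha(N')$ and cohesiveness $\lambda(N')$ so that each $\jar(N',k)$ becomes positive at exactly the prefix length $k$ where we need it to bite.

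Next, I would establish the two-way correspondence. Starting from a yes-instance of the source problem, I would construct an explicit witness ranking: place the core alternatives corresponding to the source solution in the top positions, any remaining core alternatives next, and the padding alternatives last; then verify directly, group by group and prefix by prefix, that every cohesive $N'$ with $\jar(N',k)>0$ satisfies $\avg(N', r_{\le k}) \geq \jar(N',k)$. For the converse, starting from any ranking $r$ with $q_P(r)\geq 1$, I would read off the first $t$ core alternatives appearing in $r$ and argue that they must solve the source instance, since otherwise one of the structure-voter cohesive groups would have a justifiable demand strictly exceeding its average representation at the prefix length where that group first becomes active.

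The main obstacle, and where the reduction requires the most delicate tuning, is that $q_P(r)\geq 1$ is a quantification over \emph{all} pairs $(N',k)$ simultaneously, so a single ranking has to meet every justifiable demand in lockstep across every prefix length. Choosing the number and approval patterns of the padding voters is the crux: one must ensure that (i) the intended cohesive groups reach their positive $\jar$ values at exactly the prefix where the source-problem constraint should fire, (ii) no unintended subgroup of voters accidentally becomes cohesive enough to create a spurious demand that would rule out legitimate solutions, and (iii) the padding alternatives can always be appended to a partial ranking without spoiling any group's demand. I would expect to add dummy voters each approving a private padding alternative to deflate problematic proportions, and to use disjoint padding blocks to prevent the formation of unwanted cohesive intersections.
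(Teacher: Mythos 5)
Your proposal is an outline of the right \emph{shape} of argument (the paper also reduces from a covering problem, namely \textsc{Vertex Cover} on $3$-regular graphs, using a dummy alternative and dummy voters to calibrate proportions), but it stops exactly where the actual work begins. Everything you label as ``the crux'' --- choosing the padding so that the intended groups' demands fire at the right prefix, so that no unintended cohesive subgroup creates a spurious demand, and so that singleton or small groups never have positive $\jar$ --- is left unresolved, and it is precisely this calibration that constitutes the proof. For comparison, the paper fixes $n=3\ell+3$ dummy-plus-edge voters against $m\le 3\ell$ alternatives so that any single edge-voter has $\jar(N',k)\le \lfloor k/n\rfloor=0$ for all $k\le m$, exploits $3$-regularity so that any cohesive group of edge-voters has size at most $3$ and cohesiveness exactly $1$ (two distinct edges meet in at most one vertex), and arranges that the three edge-voters incident to a vertex $a$ form a group whose demand of $1$ at prefix length $\ell+1$ forces every edge to be covered. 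Without exhibiting such a concrete arithmetic, the claim that ``one of the structure-voter cohesive groups would have a justifiable demand strictly exceeding its average representation'' is an assertion, not an argument; likewise the forward direction requires a case analysis over \emph{all} groups $N'$ (exponentially many), which you cannot dispatch by saying you would ``verify directly, group by group.''

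There is also a conceptual problem with your choice of \textsc{Independent Set} as a source problem. The condition $q_P(r)\ge 1$ only imposes \emph{lower bounds} on representation: a cohesive group can force alternatives \emph{into} a prefix, but there is no mechanism by which voters can forbid two alternatives from both appearing high in the ranking --- the only exclusion effect is the indirect scarcity of top positions. So ``voters whose common approval sets are incompatible with two adjacent core alternatives being ranked high'' is not something the model can express directly, and a packing-type source problem is a poor fit. A covering problem (\textsc{Vertex Cover}, \textsc{X3C}, or similar) is the natural choice, which is the route the paper takes. To turn your sketch into a proof you would need to commit to such a source problem and then actually carry out the calibration you correctly identify as the hard part.
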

\begin{proof}
We give a reduction from \textsc{Vertex Cover}. An instance of this problem is given by 
a graph $G=(V, E)$ and an integer $\ell$.
It is a `yes'-instance if there is a subset of $\ell$ vertices $S \subseteq V$ 
such that each edge in $E$ contains some vertex in $S$. We can assume that $G$
is $3$-regular, as
{\sc Vertex Cover} remains $\np$-hard in this case \citep{gar-joh:b:int}.

Given a vertex cover instance $(G, \ell)$ with $G=(V, E)$, we construct 
an instance of the problem of finding an optimal ranking in the following way. 
Since the degree of each vertex is exactly 3, 
we can assume that $3\ell \geq |E|$: 
instances with $3\ell < |E|$ are trivially `no'-instances.
We set $A=V\cup \{d\}$, where $d$ is a dummy alternative. 
For every edge $\{a,b\}\in E$ we create a voter that approves $\{a,b\}$. 
We also add $3\ell+3-|E|\ge 3$ dummy voters who approve $d$ only.
Consequently, 
$n = |N| = 3\ell+3$. Since the graph is $3$-regular,
we have $m=|A|= 2|E|/3+1 \leq 3\ell$. This defines the profile $P$.
We will show that $G$ has a vertex cover of size $\ell$ if and only if 
there exists a ranking $r$ with $q_P(r)\ge 1$, \ie
$\avg(N', r_{\le k})\geq {\jar(N',k)}$ for all $N'\subseteq N$ and $k\in[m]$.

$\Rightarrow$: Let $S\subseteq V$ be a vertex cover of size $\ell$. 
Consider a ranking where $d$ is ranked first, the next $\ell$ positions are occupied by elements of $S$ 
(in arbitrary order), followed by the remaining alternatives (also in arbitrary order).
Consider a group of voters $N'\subseteq N$ and an integer $k\in[m]$; 
if $\coh(N')=0$ then trivially $\avg(N', r_{\le k})\geq {\jar(N',k)}=0$,
so assume that $\coh(N')>0$. 
We consider the following cases:
\begin{itemize}
\item 
$N'\subseteq N_d$. Then $\coh(N')=1$, so for each $k\in[m]$
we have $\avg(N', r_{\le k})\ge \avg(N', r_{\le 1})=1=\coh(N')\ge\jar(N', k)$.
\item
$N'\cap N_d=\emptyset$, $|N'|>1$.
Since the intersection of two or more edges contains at most one vertex 
and the degree of each vertex is $3$, we have $\coh(N')=1$ and $|N'|\le 3$.
For $k\le \ell$, we obtain $\jar(N',k)\leq\min(\lfloor\frac{3}{3\ell+3}\cdot k\rfloor,1)=0$; 
for $k\ge \ell+1$, we have $\jar(N',k')\le 1$.
Since $r_{\le \ell+1}= S\cup \{d\}$ and $S$ is a vertex cover for $G$, 
it holds that $\avg(N', r_{\le k})\geq \avg(N', r_{\le \ell+1})\geq 1 \geq \jar(N',k)$.
\item
$N'\cap N_d=\emptyset$, $|N'|=1$, i.e.,
$N'$ contains a single edge voter.
Then $\jar(N', k)\le \pro(N', k)\le \frac{k}{3\ell+3}<1$ for all $k\le m$ (recall that $m\le 3\ell$),
so $\jar(N', k)=0$ for all $k\in [m]$.
\end{itemize}
Hence $\avg(N', r_{\le k})\geq {\jar(N',k)}$ for all $N'\subseteq N$, $k\in [m]$.

$\Leftarrow$: Consider a ranking $r$ with
$\avg(N', r_{\le k})\geq {\jar(N',k)}$ for all $N'\subseteq N$ and $k\in [m]$.
We claim that $d\in r_{\le\ell+1}$ and $S= r_{\le \ell+1}\setminus\{d\}$ is a vertex cover.
Note that we have $N_d\ge 3$, $\coh(N_d)=1$, 
so $\jar(N_d, \ell+1)\ge \min(\frac{3}{3\ell+3}(\ell+1), 1)=1$.
As voters in $N_d$ only approve $d$, 
alternative $d$ has to appear in the first $\ell+1$ positions of the ranking. 
Similarly, for every alternative $a\in A$ we have $N_a=3$, so
$\pro(N_a)=\frac{3}{3\ell+3}$ and $\coh(N_a)=1$.
For $k=\ell+1$, the justifiable demand of $N_a$ is $\jar(N_a,k)=\min(\frac{3(\ell+1)}{3\ell+3},1)=1$.
Now suppose that there exists an edge $e=\{a,b\}$ with $e\cap S = \emptyset$.
Let $N_a=\{e, e', e''\}$; we have $|e'\cap S|\le 1$, $|e''\cap S|\le 1$, 
so $\avg(N_a,S)\leq \nicefrac{2}{3}$, a contradiction.
We conclude that $S= r_{\le \ell+1}\setminus\{d\}$ is a vertex cover.
\end{proof}

Further, a fairly straightforward reduction from the Maximum $k$-Subset Intersection problem~\citep{Xavier2012471} shows that even the problem of deciding whether there exists an $(\alpha, \lambda)$-significant group of voters is $\np$-hard.

\begin{proposition}\label{prop:significantGroupsAreHard}
The problem of deciding whether there exists an $(\alpha, \lambda)$-significant group of voters is $\np$-complete.
\end{proposition}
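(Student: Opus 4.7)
The plan is to establish membership in $\np$ directly and then reduce from the Maximum $k$-Subset Intersection problem (MkSI), which the authors already cite as $\np$-hard. Recall that an instance of MkSI consists of a family of subsets $S_1,\dots,S_M$ of a universe $E$, together with integers $k$ and $t$; the question is whether there exist $k$ of these subsets whose intersection has size at least $t$. This matches our significance condition almost verbatim once we identify subsets with voters' approval sets.

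For $\np$-membership, the certificate is simply a subset $N'\subseteq N$. Given $N'$, one checks in polynomial time that $|N'|=\lceil\alpha n\rceil$ and that $|\bigcap_{i\in N'}A_i|\ge \lambda$.

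For hardness, given an MkSI instance $(S_1,\dots,S_M,k,t)$, I would first handle trivialities: if $t=0$ the answer is always \emph{yes}, so assume $t\ge 1$; any empty $S_i$ can be discarded because it cannot appear in a collection whose intersection is nonempty, so after preprocessing we may assume all $S_i$ are nonempty and $k\le M$. I would then construct a profile on $(A,N)$ by setting $A=E$, $N=[M]$, and $A_i=S_i$ for each $i\in N$; finally set $\lambda=t$ and $\alpha=k/M$. Then $n=M$, so $\lceil\alpha n\rceil=\lceil k\rceil=k$, and an $(\alpha,\lambda)$-significant group $N'$ is exactly a choice of $k$ voters whose approval sets intersect in at least $t$ alternatives, i.e., a witness for the MkSI instance. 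The equivalence in both directions is immediate from the definitions of significance, proportion and cohesiveness.

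The only delicate point is the ceiling in the definition of significance: $\alpha$ is a rational number between $0$ and $1$, and we must ensure $\lceil\alpha n\rceil$ matches the required group size $k$. Choosing $\alpha=k/M$ with $n=M$ makes $\alpha n$ an integer, sidestepping the issue entirely. Since MkSI remains $\np$-hard even with small integer parameters, the reduction is polynomial, and combined with the $\np$ certificate above this yields $\np$-completeness.
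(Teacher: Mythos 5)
Your proof is correct and follows essentially the same route as the paper: NP-membership via the obvious certificate, plus a near-identity reduction from Maximum $k$-Subset Intersection. The only (cosmetic) difference is that you identify the given subsets with \emph{voters} and the universe elements with \emph{alternatives}, whereas the paper uses the transposed identification (subsets as alternatives, elements as voters); your choice of $\alpha=k/M$ also cleanly sidesteps the ceiling issue.
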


\begin{proof}
It is clear that the problem belongs to $\np$. Below we show that it is $\np$-hard.

Let $I$ be an instance of the Maximum k-Subset Intersection problem. In $I$ we are given a collection $\mathcal{S} = \{S_1, \ldots, S_{m}\}$ of $m$ subsets over a set of $n$ elements $\mathcal{E} = \{e_1, \ldots, e_n\}$, and two positive integers, $\lambda$ and $\beta$. We ask whether there exists a subcollection of $\lambda$ sets from $\mathcal{S}$, $\{S_{j_1}, \ldots, S_{j_{\lambda}}\}$, such that $|S_{j_1} \cap \ldots \cap S_{j_{\lambda}}| \geq \beta$.

From $I$ we can construct an instance of the problem of deciding whether there exists an $(\alpha, \lambda)$-significant group of voters, in the following way. We associate sets from $\mathcal{S}$ with alternatives, and the elements from $\mathcal{E}$ with voters (a voter $e_i$ approves of an alternative $S_j$ if $e_i \in S_j$). We set $\alpha = \frac{\beta}{|\mathcal{S}|}$. It is easy to see that for each $\lambda$ alternatives $S_{j_1}, \ldots, S_{j_{\lambda}}$, $|S_{j_1} \cap \ldots \cap S_{j_{\lambda}}|$ is the size of the largest group of voters who all approve the $\lambda$ alternatives, and that there exists an $(\alpha, \lambda)$-significant group of voters if and only if there exist $\lambda$ alternatives that are all approved by some $\beta$ voters.  
\end{proof}


The measure $q_P(r)$ can be lifted from individual rankings to ranking rules:
we can measure the quality of a ranking rule $f$ as the minimum value of $q_P(f(P))$,
over all possible profiles~$P$. However, in our theoretical analysis of ranking rules
we take the following approach, which assumes more flexibility on the part of groups
of voters that seek to be represented:
given a group of proportion $\alpha$ and with cohesiveness at least $\lambda$, 
we ask at which point in the ranking the average satisfaction of this group reaches $\lambda$.
This guarantee is given by the function $\kappa(\alpha,\lambda)$ and formally defined as follows.

\begin{definition} {\bf ($\kappa$-group representation)}
Let $\kappa(\alpha, \lambda)$ be a function from $((0,1]\cap{\mathbb Q} )\times \mathbb{N}$ to $\mathbb{N}$.
A ranking $r$ provides \emph{$\kappa$-group representation ($\kappa$-GR) for profile $P$} if
for all rational $\alpha\in(0,1]$, all $\lambda\in \mathbb{N}$, and all voter groups $N'\subseteq N$ 
that are $(\alpha,\lambda)$-\sigi in $P$ it holds that 
\[
\avg(N', r_{\le \kappa(\alpha,\lambda)}) \ge \lambda.
\]
A ranking rule $f$ satisfies \emph{$\kappa$-group representation ($\kappa$-GR)} if 
$f(P)$ provides $\kappa$-group representation for every profile $P$. 
\end{definition}

Let us now explore the differences between the two measures, the worst-case quality of the ranking $q_P(f(P))$, and $\kappa$-group representation. While $q_P(f(P))$ is just a single number, $\kappa$-group representation carries more information. In particular, it makes it possible to express the fact that some groups are better represented in the top parts of the resulting ranking than further below (as would be suggested by $\kappa$ being a convex function over $\lambda$). As a more concrete example, assume that $q_P(f(P)) < \nicefrac{1}{2}$ and consider a group $N'$ that is $(\nicefrac{1}{10}, \nicefrac{m}{2})$-significant. Since the justified demand of this group is upper bounded by $\nicefrac{m}{10}$, $q_P(f(P))$ does not say anything about how far we need to go down the ranking to obtain an average representation greater than $\lambda > \nicefrac{m}{20}$ 
(the justifiable demand of $N'$ is at most $\nicefrac{m}{10}$ 
and $f$ guarantees only half of it), 
while $\kappa$-group representation provides such information for each $\lambda \in [1, \nicefrac{m}{2}]$.

In the next section, we investigate guarantees in terms of $\kappa$-group representation provided
by the ranking rules that we introduced above.


\section{Theoretical Guarantees of Ranking Rules}\label{sec:worstcase}
We start our analysis with the simplest of our rules, namely, Approval Voting. 
Interestingly, Approval Voting provides very good guarantees to majorities, that is,
groups with $\pro(N')\ge \nicefrac12$. However, for minorities it provides no guarantees at all. 

\begin{theorem}\label{thm:av}
For rational $\alpha > \nicefrac12$, Approval Voting satisfies
$\kappa(\alpha, \lambda)$-group representation for $\kappa(\alpha, \lambda) =
\big\lceil \frac{\lambda\alpha}{2\alpha - 1} \big\rceil$,
but fails it for $\kappa(\alpha, \lambda) =
\big\lceil\frac{\lambda\alpha}{2\alpha - 1}\big\rceil - 1$.
For $\alpha < \nicefrac12$, Approval Voting does not satisfy
$\kappa(\alpha, \lambda)$-group representation for any function $\kappa(\alpha, \lambda)$.
\end{theorem}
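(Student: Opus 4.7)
The proof splits into three independent pieces corresponding to the three claims.

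For the upper bound at $\alpha > \nicefrac12$, fix any $(\alpha,\lambda)$-significant group $N'$ and let $C = \bigcap_{i \in N'} A_i$, so $|C| \ge \lambda$. Writing $k = \lceil \lambda\alpha/(2\alpha - 1) \rceil$, the approach is a case split on $|C \cap r_{\le k}|$. If $|C \cap r_{\le k}| \ge \lambda$, each such alternative contributes $|N'|$ to $\sum_{a \in r_{\le k}} |N' \cap N_a|$ and this case closes immediately. Otherwise some $c^* \in C$ is missing from $r_{\le k}$; since AV orders by approval score, every $a \in r_{\le k}$ satisfies $|N_a| \ge |N_{c^*}| \ge |N'|$, and inclusion--exclusion yields $|N' \cap N_a| \ge |N'| + |N_a| - n \ge 2|N'| - n$. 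Summing over the $k$ top positions reduces the goal to $k(2|N'| - n) \ge \lambda|N'|$. Since $x/(2x-n)$ is decreasing on $x > n/2$, the worst case is $|N'| = \lceil \alpha n \rceil$, and the choice of $k$ makes the inequality $k \ge \lambda\alpha/(2\alpha - 1)$ hold on the nose.

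For the matching tightness example, I would pick $n$ so that $\alpha n$ is a positive integer (possible because $\alpha$ is rational), take $N'$ of size $\alpha n$, and use $\lambda$ cohesive alternatives $c_1,\dots,c_\lambda$ approved by exactly the voters in $N'$, together with $k-1$ ``distractor'' alternatives $a_1,\dots,a_{k-1}$ each approved by all $(1-\alpha)n$ outsiders plus exactly $(2\alpha-1)n$ voters in $N'$ (a valid subset size since $\tfrac12 < \alpha \le 1$). All alternatives then have approval score $\alpha n$, so adversarial tie-breaking places the $a_j$'s in positions $1,\dots,k-1$. A short calculation shows the average representation of $N'$ at position $k-1$ equals $(k-1)(2\alpha-1)/\alpha$, which is strictly below $\lambda$ because $k-1 < \lambda\alpha/(2\alpha-1)$ by definition of the ceiling.

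For the $\alpha < \nicefrac12$ impossibility, I would parametrize a family of profiles by a free integer $M$: let $N'$ of size $\lceil\alpha n\rceil$ approve only $c_1,\dots,c_\lambda$ and let the $n-\lceil\alpha n\rceil$ remaining voters approve only $b_1,\dots,b_M$. Choosing $n$ large enough that $\lceil\alpha n\rceil < n/2$ (available from $\alpha < \nicefrac12$), each $b_i$ outranks every $c_j$ under AV, so the first $c_j$ appears no earlier than position $M+1$, and hence $\avg(N', r_{\le k}) < \lambda$ whenever $k < M + \lambda$. Sending $M \to \infty$ defeats any candidate function $\kappa(\alpha,\lambda)$.

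The step I expect to be the main obstacle is Case~B of the upper bound: this is the only place where the precise expression $\lceil \lambda\alpha/(2\alpha-1)\rceil$ is forced by the proof, and both the ceiling and the worst-case choice of $|N'| \ge \lceil\alpha n\rceil$ must be handled carefully to extract a tight bound rather than a merely asymptotic one.
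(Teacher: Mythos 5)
Your proposal is correct and follows essentially the same route as the paper: the upper bound rests on the identical observation that if some commonly approved alternative is left outside the top $k$, then every alternative ranked above it has approval score at least $\lceil\alpha n\rceil$ and hence at least $2\lceil\alpha n\rceil-n$ approvers inside $N'$, which summed over $k$ positions forces $k\ge\lambda\alpha/(2\alpha-1)$. Your two counterexample profiles (distractors tied in score for the tightness claim, and a strictly larger outsider bloc with $M\to\infty$ for $\alpha<\nicefrac12$) are minor, equally valid variants of the paper's two-group constructions.
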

\begin{proof}
We first prove that for $\alpha > \nicefrac{1}{2}$ 
Approval Voting satisfies $\kappa(\alpha, \lambda)$-group representation 
for $\kappa(\alpha, \lambda) = \lceil\frac{\lambda\alpha}{2\alpha - 1}\rceil$.
Consider a group of voters $N' \subseteq N$ with $n'=|N'| = \lceil\alpha n\rceil$, $\coh(N')\ge \lambda$.
Let $k = \big\lceil \frac{\lambda\alpha}{2\alpha - 1} \big\rceil$,
and let $r$ be the ranking returned by Approval Voting.
Assume for the sake of contradiction that $\avg(N', r_{\le k}) < \lambda$.
Then there exists some alternative $h$ that is approved by all voters in $N'$,
but does not appear in the top $k$ positions of $r$. Thus, each alternative
in $r_{\le k}$ is approved by at least $\lceil \alpha n\rceil$ voters
and hence by at least $2\lceil \alpha n \rceil- n$ voters in $N'$. 
Hence, 
\begin{align*}
\avg(N', r_{\le k}) \ge
k \cdot \frac{2\lceil \alpha n \rceil - n}{\lceil \alpha n \rceil} \geq 
k\cdot \frac{2\alpha n - n}{\alpha n} \geq \lambda \text{.}
\end{align*}
This contradiction proves our first claim.

Now, we will show that Approval Voting does not satisfy $\kappa(\alpha, \lambda)$-group representation 
for $\kappa(\alpha, \lambda) = 
\big\lceil\frac{\lambda\alpha}{2\alpha - 1}\big\rceil - 1$. 
Fix $\alpha\in(0, 1]\cap{\mathbb Q}$ and $\lambda\in{\mathbb N}$ and
let $k = \big\lceil\frac{\lambda\alpha}{2\alpha - 1}\big\rceil - 1 < \frac{\lambda\alpha}{2\alpha - 1}$.
As $\alpha\in{\mathbb Q}$, there exist some integers $x$ and $n$ such that 
$\alpha=\nicefrac{x}{n}$. 
Let $A = A'\cup A''$, where $|A'|=|A''|=k$ and $A'\cap A''=\emptyset$.
Consider a profile on $(A,N)$ that contains two groups of voters $G$ and $F$ with sizes $|G| = |F| = x = \alpha n$,
such that $G$ and $F$ have the smallest possible intersection. That is, 
for $\alpha \leq \nicefrac{1}{2}$ the sets $G$ and $F$ are disjoint, 
and for $\alpha > \nicefrac{1}{2}$ we have $|G \cap F| = (2\alpha - 1)n$. 
Suppose that each voter in $G$ approves all alternatives in $A'$ and 
each voter in $F$ approves all alternatives in~$A''$. 
Approval Voting may rank $A'$ in the first $k$ positions.
Thus, 
\begin{align*}
\avg(F, r_{\le k})  = k \cdot \frac{(2\alpha - 1)n}{\alpha n} < \lambda \text{.}
\end{align*}

Finally, let us prove that for $\alpha \leq \nicefrac{1}{2}$ Approval Voting does not satisfy $\kappa(\alpha, \lambda)$-group
representation for any function $\kappa(\alpha, \lambda)$. 
For the sake of contradiction let us assume that this is not the case.
Let us fix $\alpha\in(0, 1]$, $\lambda\in{\mathbb N}$, 
and let $k = \kappa(\alpha, \lambda)$. 
Using the same idea as in the previous paragraph, 
we obtain an instance where there is a set of voters $N'$ with $|N'|=\alpha n$, $\coh(N')=|N'|\ge \lambda$ 
such that no voter in $N'$ approves any of the alternatives appearing
in top $k$ positions of the ranking returned by Approval Voting. 
This gives a contradiction and proves our claim.
\end{proof}

In contrast, Phragm\'{e}n's rule, SeqPAV and $p$-geometric rules,
provide reasonable guarantees for \emph{all} values of $\alpha$ and $\lambda$.

\begin{theorem}\label{thm:phragmen}
Phragm\'{e}n's rule satisfies $\kappa(\alpha, \lambda)$-group representation for 
$\kappa(\alpha, \lambda) = \lceil\frac{5\lambda}{\alpha^2} + \frac{1}{\alpha}\rceil$.
\end{theorem}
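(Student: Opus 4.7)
The plan is a proof by contradiction using Janson's continuous-time reformulation of Phragm\'en's rule. Suppose some $(\alpha,\lambda)$-significant group $N'$ satisfies $\avg(N', r_{\le k}) < \lambda$ with $k = \lceil 5\lambda/\alpha^2 + 1/\alpha\rceil$. In this formulation voters accrue budget at unit rate, an alternative $a$ becomes buyable at the earliest time $t$ satisfying $\sum_{i \in N_a}(t-s_i)_+ \ge 1$, and Phragm\'en always picks the earliest-buyable alternative, raising each approver's load $s_i$ to the purchase time. Let $t_j$ denote the buy time (equivalently, the maximum load) after $j$ rounds, $s_i^{(j)}$ the cumulative load of voter $i$, and write $S_j := \sum_{i \in N'} s_i^{(j)}$ and $C^* := \bigcap_{i \in N'} A_i$.

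Since $|C^*| \ge \lambda$ and every $i \in N'$ approves all of $C^*$, the hypothesis $\avg(N', r_{\le k}) < \lambda$ forces $|C^* \cap r_{\le k}| < \lambda$, so some $c^* \in C^* \setminus r_{\le k}$ exists. Because $c^*$ was available at every round yet never selected, its buy time $t_{c^*}^{(j)} = (1 + \sum_{i \in N_{c^*}} s_i^{(j)})/|N_{c^*}|$ is at least $t_{j+1}$ for all $j < k$. Using $N' \subseteq N_{c^*}$ together with $s_i^{(j)} \le t_j \le t_{j+1}$ on $N_{c^*} \setminus N'$ and simplifying produces the \emph{slack inequality}
\[
|N'|\,t_{j+1} - S_j \;\le\; 1 \qquad \text{for every } j < k.
\]
Combined with $S_k \le \sum_{i \in N'} |A_i \cap r_{\le k}| = |N'|\avg(N',r_{\le k}) < \lambda |N'|$, applying this at $j = k-1$ yields the upper bound $t_k < \lambda + 1/|N'|$.

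The main step is to establish a matching lower bound on $t_k$ that contradicts the assumed value of $k$. My plan is to classify each round of Phragm\'en's execution as \emph{serving} (if some voter in $N'$ approves $a_j$) or \emph{non-serving}. In non-serving rounds $S_j$ does not grow, so the slack inequality forces $t_{j+1} - t_j \le 1/|N'|$, which implies that no run of non-serving rounds exceeds about $1/\alpha$ consecutive rounds before $t_j$ would leave the permissible range. Serving rounds each add at most one unit to $S_j$ and correspondingly at most $1/|N'|$ to $t_j$'s budget, so driving $S_k$ to $\lambda|N'|$ (the only way, under our assumption, to avoid $\avg \ge \lambda$ being forced) needs $\Theta(\lambda)$ serving rounds. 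Amortizing over the intervening non-serving rounds then yields the bound $k \le \lceil 5\lambda/\alpha^2 + 1/\alpha\rceil$, with the additive $1/\alpha$ capturing the initial block of non-serving rounds.

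The hard part will be making this amortized accounting fully rigorous and extracting the constant $5$. The subtlety is that a serving round can contribute arbitrarily little to $S_j$ when $|N_{a_j} \cap N'|$ is small, so the argument must exploit that Phragm\'en's chosen $a_j$ has minimum buy time among all available candidates: in particular, one must relate $|N_{a_j} \cap N'|$ to the current slack $|N'|t_j - S_j$ via the comparison $t_{a_j}^{(j-1)} \le t_{c^*}^{(j-1)}$, and then use $|N'| \ge \alpha n$ to convert this into the multiplicative $1/\alpha^2$ factor. The constant $5$ should emerge from explicitly balancing these competing inequalities.
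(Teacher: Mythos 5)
Your setup matches the paper's: argue by contradiction, fix a never-selected witness $c^*\in\bigcap_{i\in N'}A_i$, and derive your ``slack inequality'' $\sum_{i\in N'}(t_{j+1}-s_i^{(j)})\le 1$ (the paper's claim that the total excess of $N'$ below the maximum load never exceeds $1$, since otherwise buying $c^*$ would give a smaller maximum load) together with the per-round bound $t_{j+1}-t_j\le 1/|N'|$. Up to that point the two arguments coincide. The genuine gap is the step you yourself flag as ``the hard part'': the serving/non-serving amortization at the level of \emph{rounds} cannot close the argument, because a serving round may touch anywhere between $1$ and $|N'|$ voters of $N'$, and the quantity you must bound from below, $|N'|\cdot\avg(N',r_{\le k})=\sum_j |N_{a_j}\cap N'|$, counts \emph{voter--round incidences}, not rounds. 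If you try to convert the load delivered to $N'$ in round $j$ into incidences using your slack inequality, you get ``load delivered $\le |N_{a_j}\cap N'|/|N'| + (\text{total excess})$''; summing over $k$ rounds the excess term contributes an additive $k$, which swamps the lower bound $\sum_{i\in N'}s_i^{(k)}\ge \alpha k-1$ on total delivered load and yields nothing. Your proposed patch (comparing $t_{a_j}$ to $t_{c^*}$ to relate $|N_{a_j}\cap N'|$ to the current slack) does not repair this, because the comparison controls the \emph{sum} of slacks of $a_j$'s approvers, which is again a per-round quantity with the same additive loss.

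What the paper's proof actually needs, and what is absent from your plan, is a \emph{second-moment} potential over individual voters: $S(t)=\sum_{i\in N'}\bigl(\mu(t)-\ell_i(t)\bigr)^2$, the sum of squared excesses. Because each individual increment satisfies $\delta_i(t+1)\le e_i(t)+1/|N'|$, convexity makes $S$ drop by at least $\delta_i^2(t+1)$ when voter $i$'s load jumps, while $S$ recharges by only $O(1/|N'|)$ per round (using the slack inequality). This yields $\sum_{i,t}\delta_i^2(t)\le 3k/|N'|$, which combined with $\sum_{i,t}\delta_i(t)\ge\alpha k-1$ and the Cauchy--Schwarz inequality lower-bounds the number of pairs $(i,t)$ with $\delta_i(t)>0$ by $(\alpha k-1)^2|N'|/(3k)$; this count is exactly a lower bound on $|N'|\cdot\avg(N',r_{\le k})$, and plugging in $k=\lceil 5\lambda/\alpha^2+1/\alpha\rceil$ gives the contradiction (this is also where the constant $5$ comes from). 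Without some device of this kind that penalizes large individual jumps superlinearly, the first-moment bookkeeping in your sketch cannot distinguish ``many small increments to many voters'' from ``few large increments,'' and the theorem does not follow.
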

\begin{proof}
Fix $\alpha\in (0, 1]$, $\lambda\in{\mathbb N}$, a profile $P$, 
and a group of voters $N' \subseteq N$ such that $n' = |N'| = \lceil \alpha n \rceil$
and $\coh(N')\ge \lambda$. Set $y = \nicefrac{1}{n'}$. 
Let $\ell_{i}(t)$ denote the load of voter $i$ after the $t$-th step of Phragm\'{e}n's rule. 
Let $\mu(t) = \max_{i \in N'} \ell_i(t)$, \ie
$\mu(t)$ is the maximum load across all voters from $N'$ after the $t$-th step.
Further, define the {\em excess of voter $i$ at step $t$} 
as $e_{i}(t) = \mu(t) - \ell_i(t)$, set $\delta_i(t) = \ell_i(t) - \ell_i(t-1)$, and let
\begin{align*}
E(t) = \sum_{i \in N'} e_{i}(t) \quad \text{and} \quad S(t) = \sum_{i \in N'} e_{i}(t)^2 \text{.}
\end{align*}
The above definitions are illustrated in Figure~\ref{fig:phragmen_notation}.

Let $k = \lceil\frac{5\lambda}{\alpha^2} + \frac{1}{\alpha}\rceil$ and let $r$ be ranking returned by Phragm\'{e}n's rule. 
For the sake of contradiction, let us assume that $\avg(N',r_{\le k}) < \lambda$. 
Then during the first $k$ steps of the rule 
there exists an alternative that is approved 
by all members of $N'$, but has not yet been selected by the rule;
denote this alternative by $h$.

First, we prove that for all $t\in[k]$ it holds that $E(t) \leq 1$. 
In fact, we will prove a stronger claim:
\begin{align}\label{ineq:boundedGapArea}
\sum_{i \in N'}\Big(\max_{i \in N}\ell_i(t) - \ell_i(t)\Big) \leq 1 \text{.}
\end{align}
(The claim is stronger because we take the maximum over all voters rather 
than over the voters from $N'$.)  
Indeed, suppose for the sake of contradiction that this is not the case; 
let $t$ be the first step of the algorithm where Inequality~\eqref{ineq:boundedGapArea}
is violated. 
Since the loads of all voters monotonically increase, 
we have $M = \max_{i \in N}\ell_i(t) > \max_{i \in N}\ell_i(t-1)$,
\ie the maximum load strictly increases. However, since $E(t)>1$, 
the algorithm could have selected $h$ and distributed the associated load in such a way 
that the maximum load is lower than $M$, a contradiction. Thus, $E(t) \leq 1$ for each $t\in[k]$. 

\begin{figure}[tb]
\begin{center}
   \includegraphics[scale=0.5]{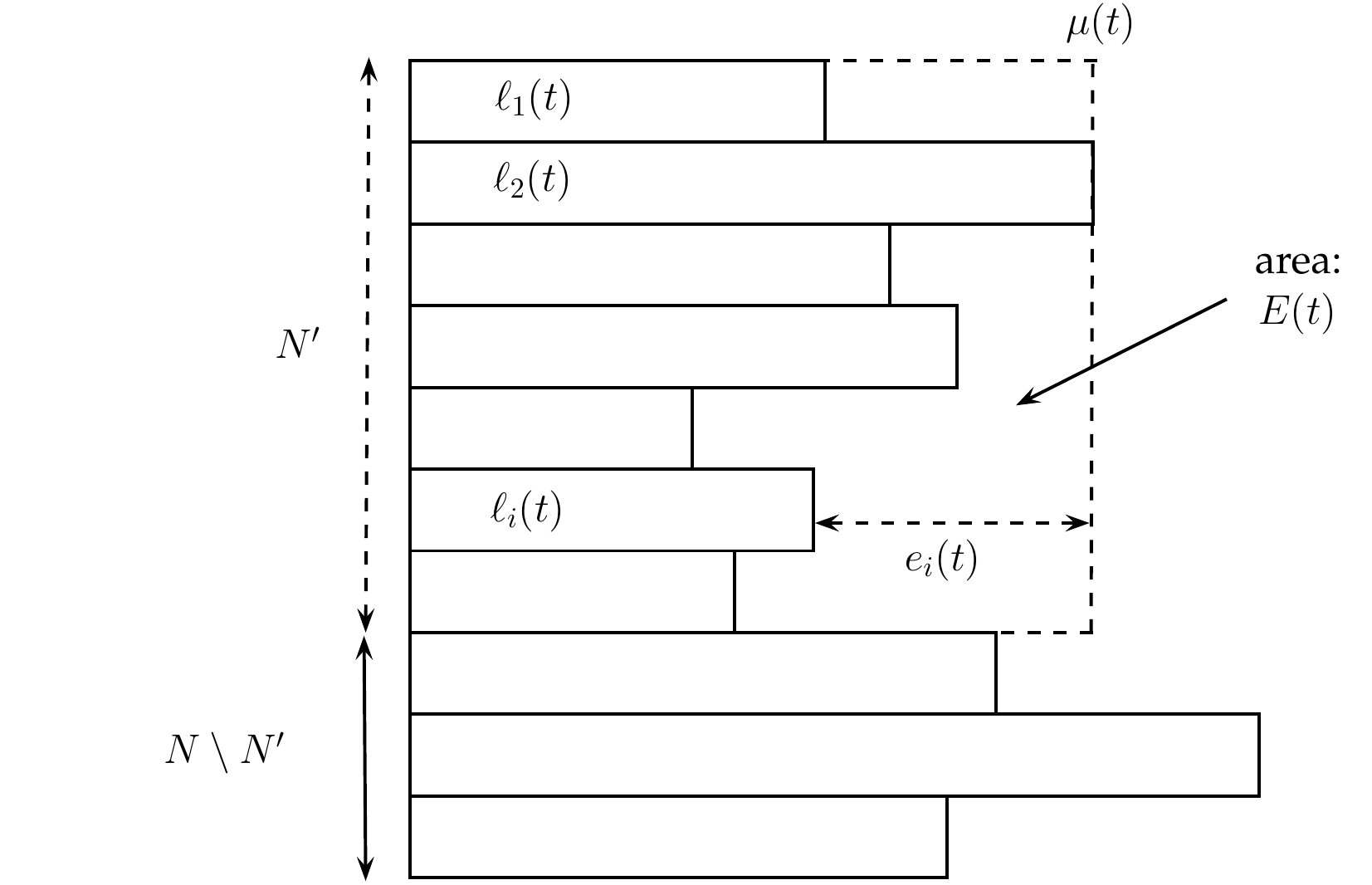}
\end{center}
\caption{Illustration of the notation used in the proof of Theorem~\ref{thm:phragmen}.}
\label{fig:phragmen_notation}
\end{figure}

Second, we prove that 
\begin{align}\label{ineq:sumOfLoads}
\sum_{i \in N'} \ell_i(k) \geq n'\Big(\frac{k}{n} - y\Big) \text{.}
\end{align}
Again, for the sake of contradiction 
assume that $\sum_{i \in N'} \ell_i(k) < n'\big(\frac{k}{n} - y\big) = n'\frac{k}{n}-1$. 
Note that the total load assigned to the voters after $k$ steps of the algorithm 
is equal to $k$. Thus, by the pigeonhole principle, after the $k$-th round
the load of some voter is at least $\frac{k}{n}$. 
Let $t$ be the first round where the load of some voter is at least $\frac{k}{n}$. 
We have
$$
\sum_{i \in N'} \ell_i(t)\le \sum_{i \in N'} \ell_i(k) < n'\frac{k}{n} - 1.
$$
Thus, the rule could have chosen $h$ and distributed the associated load
in such a way that the load of each voter is less than $\frac{k}{n}$, 
a contradiction. 
This proves Inequality~\eqref{ineq:sumOfLoads}. 
Since $n'(\frac{k}{n} - y) = n'\frac{k}{n}-1  \geq \alpha k - 1$, we infer that
\begin{equation}
\sum_{i \in N'}\sum_{t=1}^k \delta_{i}(t) = \sum_{i \in N'}\sum_{t=1}^k \Big( \ell_i(t) - \ell_i(t-1) \Big) 
= \sum_{i \in N'}\Big( \ell_i(k) - \ell_i(0) \Big) = \sum_{i \in N'}\ell_i(k) \geq \alpha k - 1.
\end{equation}

Further, let us investigate the relationship between the average representation of voters from $N'$ 
and the value $S(t)$. First, observe that $\mu(t+1) - \mu(t) \leq y$: indeed, if that was not the case,
the algorithm could have chosen $h$ at step $t$ and distributed 
the associated load uniformly among the voters in $N'$ to achieve a lower maximum load. 

Consider a voter $i \in N'$. Since $\mu(t+1) - \mu(t) \leq y$, it follows that $\delta_{i}(t+1) \le e_i(t) + y$.
Multiplying by $-2\delta_i(t+1)$, we obtain
\begin{align}\label{ineq:rvEstimation}
-2(y + e_{i}(t))\delta_{i}(t+1) + \delta_{i}^2(t+1) \leq -\delta_{i}^2(t+1) \text{.}
\end{align}
Also, for all $t\in[k-1]$ we have
\begin{equation}\label{ineq:gapExpansion}
	e_{i}(t+1) \leq e_{i}(t) + y - \delta_{i}(t+1) \text{.}
\end{equation}

We are ready to assess the value $S(t+1) - S(t)$:
\begin{align}
S(t+1) - S(t) &= \sum_{i \in N'} \Big(e_{i}^2(t+1) - e_{i}^2(t)\Big) \\
              &\leq \sum_{i \in N'} \Big((e_i(t) + y - \delta_i(t+1))^2 - e_{i}^2(t)\Big) \label{ineq:gapExpansionUsage} \\
              &= \sum_{i \in N'} \Big(2e_i(t)y - 2(y + e_{i}(t))\delta_i(t+1) + y^2 + \delta_i^2(t+1)\Big) \label{ineq:gapReduction}\\
              &\leq  \sum_{i \in N'} \Big(2e_i(t)y - \delta_i^2(t+1) + y^2\Big) \label{ineq:rvEstimationUsage}\\
              &= 2yE(t) + y^2 n' - \sum_{i \in N'}\delta_i^2(t+1) \label{ineq:gapEstimationUsage} \\
              &\leq 3y - \sum_{i \in N'}\delta_{i}^2(t+1) \textrm{.} \label{ineq:last}
\end{align}
In the above sequence of inequalities, 
\eqref{ineq:gapExpansionUsage} follows from \eqref{ineq:gapExpansion}, 
\eqref{ineq:gapReduction} and \eqref{ineq:gapEstimationUsage} follow by simple algebraic operations, 
\eqref{ineq:rvEstimationUsage} is a consequence of \eqref{ineq:rvEstimation}, 
and \eqref{ineq:last} follows from the fact that $E(t) \leq 1$ and $yn' = 1$.
As a result, we get
\begin{align*}
0 \leq S(k) - S(0) = \sum_{t = 1}^{k} \Big(S(t) - S(t-1)\Big) \leq 3ky - \sum_{i \in N'}\sum_{t=1}^k\delta_i^2(t) \textrm{,}
\end{align*}
and thus $\sum_{i \in N'}\sum_{t=1}^k \delta_i^2(t) \leq 3ky$. 

To summarize, we obtained an upper bound of $3ky$ on the sum of squares of the variables 
from $\{\delta_i(t)\}_{i \in N', t \in [k]}$ and a lower bound of $\alpha k - 1$ on their sum.
Now, we move to the main step of our proof: we will use these two bounds 
to assess the total number of representatives of the voters from $N'$ in top $k$ positions.

Let $z_i(t)=1$ if $\delta_i(t)$ is positive and $z_i(t)=0$ otherwise.
Note that if $z_i(t)=1$, this means that  
voter $i$ gets one more representative at step $t$.
It follows that $\avg(N', r_{\le k}) \ge \frac{1}{n'}\sum_{i \in N'}\sum_{t=1}^k z_i(t)$. 

Recall that the Cauchy-–Schwarz inequality states that for every pair of sequences of $n$ real values, 
$(u_1, \ldots, u_\nu)$ and $(v_1, \ldots, v_\nu)$ it holds that 
\begin{align}\label{ineq:cauchy–schwarz}
\left(\sum_{i=1}^\nu u_i v_i\right)^2\leq \left(\sum_{i=1}^\nu u_i^2\right) \left(\sum_{i=1}^\nu v_i^2\right)\text{.}
\end{align}
Applying this inequality, we get 
\begin{align*}
\left(\sum_{i \in N'}\sum_{t=1}^k \delta_i(t)\right)^2 =
\left(\sum_{i \in N'}\sum_{t=1}^k z_i(t) \delta_i(t)\right)^2 \leq 
\left(\sum_{i \in N'}\sum_{t=1}^k z_i^2(t)\right) \left(\sum_{i \in N'}\sum_{t=1}^k \delta_i^2(t)\right)\text{.}
\end{align*}
We infer that $z = \sum_{i \in N'}\sum_{t=1}^k z_i^2(t) \geq \frac{(\alpha k-1)^2}{3ky}$.
Now, recall that $\frac{5\lambda}{\alpha^2} + \frac{1}{\alpha}\le k \le \frac{5\lambda}{\alpha^2} + \frac{1}{\alpha} +1$,
and hence $\alpha k-1\ge \frac{5\lambda}{\alpha}$.
We have
\begin{align*}
\frac{z}{n'} \geq \frac{(\frac{5\lambda}{\alpha})^2}{3k} \geq 
\frac{(\frac{5\lambda}{\alpha})^2}{3(\frac{5\lambda}{\alpha^2} + \frac{1}{\alpha}+1)} \geq 
\frac{\frac{25\lambda^2}{\alpha^2}}{3 \cdot \frac{7\lambda}{\alpha^2}} \geq \lambda \text{.}
\end{align*}
Here the inequality $\frac{5\lambda}{\alpha^2} + \frac{1}{\alpha} +1 \leq \frac{7\lambda}{\alpha^2}$ 
follows from the fact that $\alpha \leq 1$, $\lambda\ge 1$. 
This gives a contradiction and completes the proof.

Now, let us provide some intuition behind the mathematical formulas and explain why it was useful to consider 
the sum of squares, $S(t)$. Phragm\'{e}n's rule aims at distributing the load among the voters as equally as 
possible. Intuitively, to show that on average the voters have a significant number of representatives, one needs 
to show that, on average, when a voter gets an additional representative, she is assigned a relatively small amount 
of load. In some sense $S(i)$ can be viewed as a potential function. In each step, this `potential function' 
increases by a bounded value. Yet, since the quadratic function is convex, when a voter gets an additional 
representative, the potential function drops superlinearly with 
respect to the load $\ell$. This allows to infer that the increase of the load 
of a voter getting an additional representative 
can be bounded; formally, this is accomplished by using the Cauchy–-Schwarz inequality. Considering the sum of squares 
of the values $e_{i}(t)$ allows us to invoke this inequality, yet we believe that 
similar bounds can be obtained by considering other types of convex transformations of the values 
$e_{i}(t)$.
\end{proof}

Let us recall the relation between arithmetic, geometric, and harmonic means. For each sequence of positive values, $a_1, \ldots a_n$, it holds that:
\begin{align}\label{eq:inequ}
\frac{1}{n}\sum_{i=1}^n a_i \;\geq\; \sqrt[n]{\prod_{i=1}^n a_i} \;\geq\; \frac{n}{\sum_{i=1}^n\frac{1}{a_i}} \text{.}
\end{align}

\begin{theorem}\label{thm:seq-pav}
SeqPAV satisfies $\kappa(\alpha, \lambda)$-group representation for 
$\kappa(\alpha, \lambda) = \big\lceil\frac{2(\lambda+1)^2}{\alpha^2}\big\rceil$.
\end{theorem}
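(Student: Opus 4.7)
The strategy is proof by contradiction. Fix a group $N'\subseteq N$ that is $(\alpha,\lambda)$-significant, let $r$ be the ranking produced by SeqPAV, set $k=\lceil 2(\lambda+1)^2/\alpha^2\rceil$, and assume $\avg(N', r_{\le k}) < \lambda$. Since $|\bigcap_{i\in N'} A_i| \geq \lambda$, if every alternative of this common intersection belonged to $r_{\le k}$ then each voter in $N'$ would contribute at least $\lambda$ to the average, so there must exist some $h\in \bigcap_{i\in N'} A_i$ with $h\notin r_{\le k}$. This $h$ remains unselected throughout all $k$ rounds, and will serve as a ``witness'' against SeqPAV's choices.

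Abbreviate $s_i(t) = |A_i \cap r_{\le t}|$, so the marginal PAV-gain at step $t$ equals $\Delta_t := w_{\mathrm{PAV}}(r_{\le t})-w_{\mathrm{PAV}}(r_{\le t-1}) = \sum_{i\in N_{a_t}} 1/(s_i(t-1)+1)$. Since SeqPAV picks $a_t$ maximizing this marginal gain and $h$ is available,
\[
\Delta_t \;\ge\; \sum_{i\in N_h} \frac{1}{s_i(t-1)+1} \;\ge\; \sum_{i\in N'} \frac{1}{s_i(t-1)+1},
\]
where the second step uses $N'\subseteq N_h$. Let $\bar{s}(t) = \frac{1}{|N'|}\sum_{i\in N'}s_i(t)$. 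The AM--HM direction of \eqref{eq:inequ}, applied with $a_i = s_i(t-1)+1$, bounds the sum from below by $|N'|/(\bar{s}(t-1)+1)$, and since $\bar{s}$ is non-decreasing with $\bar{s}(k)<\lambda$, this is at least $|N'|/(\lambda+1)$. Summing over $t\in[k]$ yields $w_{\mathrm{PAV}}(r_{\le k}) \geq k|N'|/(\lambda+1) \geq k\alpha n/(\lambda+1)$.

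For the matching upper bound, I use $s_i(k)\le k$ together with $\sum_{j=1}^s 1/j \le 1+\ln s$ to obtain $w_{\mathrm{PAV}}(r_{\le k}) = \sum_{i\in N}\sum_{j=1}^{s_i(k)} 1/j \le n(1+\ln k)$. Combining with the previous bound forces $1+\ln k \ge 2(\lambda+1)/\alpha$. Writing $y=(\lambda+1)/\alpha$, this is equivalent to $1 + \ln(2y^2) \ge 2y$, which fails for $y\ge 1$: the function $g(y) = 2y - 1 - \ln 2 - 2\ln y$ satisfies $g(1) = 1-\ln 2 > 0$ and $g'(y) = 2 - 2/y \ge 0$. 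Since $\alpha\le 1$ and $\lambda\ge 1$ guarantee $y\ge 1$, we obtain the required contradiction.

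The main (modest) obstacle is selecting the right pair of estimates: the AM--HM step forces $\Delta_t$ to grow roughly like $|N'|/\lambda$ per round, so the total grows linearly in $k$, whereas the per-voter harmonic upper bound lets $w_{\mathrm{PAV}}(r_{\le k})$ grow only like $n\ln k$. The quadratic threshold $k\asymp (\lambda+1)^2/\alpha^2$ is exactly where these two growth rates cross, and a short one-variable calculus check then closes out the contradiction.
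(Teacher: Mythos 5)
Your proof is correct, and it shares its opening move with the paper's own argument: both fix a witness alternative $h \in \bigcap_{i\in N'} A_i$ lying outside $r_{\le k}$ and use the AM--HM inequality to lower-bound the marginal PAV gain at every step $t \le k$ by $|N'|/(\lambda+1)$. From that point on, however, the two arguments genuinely diverge. The paper introduces the potential $T(t) = \sum_{i\in N} \frac{1}{a_i(t)+1}$, applies Cauchy--Schwarz a second time to show that each step decreases $T$ by at least $\frac{n\alpha^2}{2(\lambda+1)^2}$, and derives the contradiction from $0 \le T(t) \le n$. You instead sum the marginal gains to obtain $w_{\mathrm{PAV}}(r_{\le k}) > k\alpha n/(\lambda+1)$ and play this linear-in-$k$ lower bound against the elementary harmonic-series upper bound $w_{\mathrm{PAV}}(r_{\le k}) \le n(1+\ln k)$. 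This is arguably simpler (one inequality application and no potential function), and it is in fact stronger: a linear function of $k$ overtakes a logarithmic one already at $k = \Theta\bigl(\frac{\lambda+1}{\alpha}\log\frac{\lambda+1}{\alpha}\bigr)$, so your method establishes $\kappa$-group representation for an asymptotically smaller $\kappa$ than the stated $\bigl\lceil 2(\lambda+1)^2/\alpha^2\bigr\rceil$ --- contrary to your closing remark, the quadratic threshold is not where the two growth rates cross but comfortably beyond it, which is precisely why your calculus check has slack. One small imprecision to fix: replacing $k=\lceil 2y^2\rceil$ by $2y^2$ inside $\ln k$ is not an equivalence, since the ceiling can only increase $\ln k$. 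This is harmless because $h(k)=k/y-1-\ln k$ is increasing for $k\ge y$ and $2y^2\ge y$, so $h(k)\ge h(2y^2)=g(y)>0$ still yields the contradiction --- but you should say so explicitly (or simply bound $\ln k\le\ln(2y^2+1)$ and use $y\ge 2$, which is available since $\lambda\ge 1$ and $\alpha\le 1$).
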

\begin{proof}
Fix $\alpha\in (0, 1]$, $\lambda\in{\mathbb N}$, a profile $P$,
and a group of voters $N' \subseteq N$ such that $n' = |N'| = \lceil \alpha n \rceil$
and $\coh(N')\ge \lambda$. 
Set $k = \big\lceil \frac{2(\lambda+1)^2}{\alpha^2} \big\rceil$.
Let $r$ be the ranking returned by SeqPAV. 

Assume for the sake of contradiction that $\avg(N', r_{\le k}) < \lambda$ 
and set $z= n'\cdot \avg(N', r_{\le k})$; note that $z < \lambda n'$.
By our assumption, 
at the end of each step $t\in[k]$ there exists some alternative that is approved by all voters in $N'$,
but has not been ranked yet; let $h$ be some such alternative.
For $t\in[k+1]$, consider the moment just before the $t$-th step of SeqPAV.
Let $a_i(t)$ denote the number of alternatives selected so far that appear in $A_i$,
and let $T(t) = \sum_{i \in N}\frac{1}{a_i(t) + 1}$. Note that 
$n=T(1)\ge T(2)\ge \ldots \ge T(k+1)\ge 0$.
For each $t\in[k+1]$ we have
\begin{align*}
\textstyle\sum_{i \in N'}a_i(t) \le z \textrm{.}
\end{align*} 
From \eqref{eq:inequ} we infer
\begin{align*}
\frac{z + n'}{n'} \ge \frac{\sum_{i \in N'} (a_i(t) + 1)}{n'} \geq 
\frac{n'}{\sum_{i \in N'} \frac{1}{a_i(t)+1}}\textrm{,}
\end{align*}
that is,
\begin{align*}
\sum_{i \in N'} \frac{1}{a_i(t)+1} \ge \frac{(n')^2}{z + n'} > \frac{(n')^2}{n'(\lambda + 1)} = 
\frac{n'}{\lambda+1}\text{.} 
\end{align*}
Consider the alternative $a$ selected by SeqPAV at step $t$, $t\in[k]$. 
Without loss of generality, assume that $N_a=\{1, \dots, s\}$
at step $t$ SeqPAV selects an alternative $a$ that is approved by voters $1, \ldots, s$.
Since alternative $h$ is available at this step,  
it has to be the case that SeqPAV favors $a$ over $h$, \ie
for the harmonic weight vector $\vecw=(1, \nicefrac12, \nicefrac13, \dots)$
we have  
$w(r_{\le t-1}\cup\{a\}) - w(r_{\le t-1}) \ge w(r_{\le t-1}\cup\{h\}) - w(r_{\le t-1})$.
This implies
\begin{align*}
\sum_{i=1}^{s}\frac{1}{a_i(t) + 1} \ge \sum_{i \in N'} \frac{1}{a_i(t)+1} > \frac{n'}{\lambda+1} \text{.}
\end{align*}
Thus, we have
\begin{align*}
T(t) - T(t+1) &= \sum_{i=1}^{s}\Big(\frac{1}{a_i(t) + 1} - \frac{1}{a_i(t) + 2}\Big) \\
              &= \sum_{i=1}^{s}\frac{1}{(a_{i}(t)+1)(a_{i}(t) + 2)} \ge \sum_{i=1}^{s}\frac{1}{2(a_{i}(t) + 1)^2}.
\end{align*}
Applying the Cauchy--Schwarz inequality to $(\frac{1}{a_1(t)+1}, \dots, \frac{1}{a_s(t)+1})$ and $(1, \dots, 1)$,
we obtain
\begin{align*}
&\sum_{i=1}^{s} \frac{1}{2(a_{i}(t) + 1)^2} \geq \frac{1}{2s}\Big(\sum_{i=1}^{s}\frac{1}{a_{i}(t) + 1}\Big)^2 
                      > \frac{1}{2n} \Big(\frac{n'}{\lambda + 1} \Big)^2 \geq \frac{n \alpha^2}{2(\lambda+1)^2} \textrm{.}
\end{align*}
Since the above inequality holds for each $t\in[k]$, we have
\begin{align*}
T(1) - T(k+1) &= \sum_{t = 1}^{k}\big( T(t) - T(t+1) \big) > \frac{kn \alpha^2}{2(\lambda+1)^2} \geq n\text{.}
\end{align*}
Thus $T(k+1) < T(1) - n = n - n = 0$, 
a contradiction. This completes the proof.
\end{proof}

The technique developed in the proof of Theorem~\ref{thm:seq-pav} 
can be used to provide similar bounds for other RAV rules.
In particular, for the $p$-geometric rule we obtain the following bound. 

\begin{theorem}\label{thm:pgeometric}
For each $p > 1$, the $p$-geometric rule satisfies 
$\kappa(\alpha, \lambda)$-group representation for 
$\kappa(\alpha, \lambda) = \big\lceil\frac{p^{\lambda+1}}{\alpha(p-1)}\big\rceil$.
\end{theorem}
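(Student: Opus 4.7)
The plan is to mimic the structure of the proof of Theorem~\ref{thm:seq-pav}, replacing the harmonic potential $\sum_i 1/(a_i(t)+1)$ with the geometric potential $T(t) = \sum_{i \in N} p^{-a_i(t)}$, and replacing the AM--HM inequality with the AM--GM inequality (the first inequality in \eqref{eq:inequ}). Fix $\alpha \in (0,1]\cap\mathbb{Q}$, $\lambda \in \mathbb{N}$, a profile $P$, and an $(\alpha,\lambda)$-significant group $N' \subseteq N$ with $n' = |N'| = \lceil \alpha n\rceil$ and $\coh(N') \ge \lambda$. Set $k = \lceil p^{\lambda+1}/(\alpha(p-1))\rceil$ and let $r$ be the ranking returned by the $p$-geometric rule. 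Assume for contradiction that $\avg(N', r_{\le k}) < \lambda$. Then at every step $t\in[k]$ some alternative $h$ that is approved by all members of $N'$ has not yet been ranked.

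For each $i\in N$ and $t\in [k+1]$, let $a_i(t) = |A_i \cap r_{\le t-1}|$ denote the number of previously selected alternatives approved by voter $i$, and define $T(t) = \sum_{i \in N} p^{-a_i(t)}$, so that $T(1) = n$ and $T(k+1) \ge 0$. If $a$ is the alternative chosen at step $t$, then only voters approving $a$ see their $a_i$ increase, and a direct calculation yields
\[
T(t) - T(t+1) \;=\; \frac{p-1}{p}\sum_{i : a \in A_i} p^{-a_i(t)} \;=\; (p-1)\sum_{i : a \in A_i} p^{-(a_i(t)+1)}.
\]
The $p$-geometric rule chose $a$ over $h$, so its marginal contribution is at least that of $h$; since $N' \subseteq N_h$, this gives
\[
\sum_{i : a \in A_i} p^{-(a_i(t)+1)} \;\ge\; \sum_{i \in N'} p^{-(a_i(t)+1)}.
\]

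The crux is to bound $\sum_{i \in N'} p^{-a_i(t)}$ from below. Since $\avg(N', r_{\le k}) < \lambda$ and $a_i(\cdot)$ is nondecreasing, the average $\bar{a}(t) = \tfrac{1}{n'}\sum_{i \in N'} a_i(t)$ satisfies $\bar{a}(t) \le \bar{a}(k+1) < \lambda$ for every $t \le k+1$. Applying AM--GM to the positive values $(p^{-a_i(t)})_{i\in N'}$ and using $p > 1$,
\[
\frac{1}{n'}\sum_{i \in N'} p^{-a_i(t)} \;\ge\; \Bigl(\prod_{i\in N'} p^{-a_i(t)}\Bigr)^{1/n'} \;=\; p^{-\bar{a}(t)} \;>\; p^{-\lambda}.
\]
Combining everything, $T(t)-T(t+1) > n'(p-1)/p^{\lambda+1}$ for each $t\in[k]$. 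Telescoping and using $T(1)=n$, $T(k+1)\ge 0$, and $n' \ge \alpha n$,
\[
n \;\ge\; T(1) - T(k+1) \;>\; k\cdot\frac{n'(p-1)}{p^{\lambda+1}} \;\ge\; \frac{k\alpha n(p-1)}{p^{\lambda+1}},
\]
which rearranges to $k < p^{\lambda+1}/(\alpha(p-1))$, contradicting the choice of $k$.

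The main conceptual step is recognizing that the geometric weight vector invites a geometric potential, and that AM--GM plays exactly the role AM--HM played for SeqPAV; once this correspondence is set up, the computation is routine and actually cleaner than in Theorem~\ref{thm:seq-pav}, since no Cauchy--Schwarz step is needed (each step of the rule already yields an additive decrease of the potential, independent of whether a voter in $N'$ gains a new representative at that step).
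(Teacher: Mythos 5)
Your proof is correct and follows essentially the same route as the paper's: the same geometric potential $T(t)=\sum_{i\in N}p^{-a_i(t)}$, the same greedy-comparison with the unranked alternative $h$, and the same telescoping argument; your AM--GM step applied to $(p^{-a_i(t)})_{i\in N'}$ is exactly the paper's GM--HM step applied to the reciprocals $(p^{a_i(t)+1})_{i\in N'}$. Your closing observation that no Cauchy--Schwarz step is needed (unlike for SeqPAV) also matches the paper's proof.
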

\begin{proof}
We will use the same notation as in the proof of
Theorem~\ref{thm:seq-pav}.
Fix $\alpha\in (0, 1]$, $\lambda\in{\mathbb N}$, a profile $P$,
and a group of voters $N' \subseteq N$ such that $n' = |N'| = \lceil \alpha n \rceil$
and $\coh(N')\ge \lambda$.
Set $k = \big\lceil \frac{p^{\lambda+1}}{\alpha(p-1)} \big\rceil$.
Let $r$ be the ranking returned by the $p$-geometric rule.

Assume for the sake of contradiction that $\avg(N', r_{\le k}) < \lambda$
and set $z= n'\cdot \avg(N', r_{\le k})$; note that $z < \lambda n'$.
By our assumption,
at the end of each step $t\in[k]$ there exists some alternative that is approved by all voters in $N'$,
but has not been ranked yet; let $h$ be some such alternative.
For $t\in[k+1]$, consider the moment just before the $t$-th step of the $p$-geometric rule.
Let $a_i(t)$ denote the number of alternatives selected so far that appear in $A_i$,
and let $T(t) = \sum_{i \in N}\frac{1}{p^{a_{i}(t)}}$. 
Note that $n=T(1)\ge T(2)\ge \ldots \ge T(k+1)\ge 0$.
We have
\begin{align*}
\textstyle\sum_{i \in N'} a_i(t) \le z \textrm{.}
\end{align*} 
From the inequality between the geometric mean and the harmonic mean we infer
\begin{align*}
\sum_{i \in N'} \frac{1}{p^{a_i(t)+1}} 
\geq \frac{n'}{\sqrt[n']{\prod_{i \in N'}p^{a_i(t)+1}}} \geq \frac{n'}{p^{\frac{z + n'}{n'}} } 
> \frac{n'}{p^{\lambda +1}} \textrm{.}
\end{align*}
Consider the alternative $a$ selected by the $p$-geometric rule at step $t$, $t\in[k]$;
assume without loss of generality that $N_a=\{1, \dots, s\}$.
By our assumption, alternative $h$ is still available at that step.
This means that for $\vecw=(1, \nicefrac{1}{p}, \dots)$ we have
$w(r_{\le t-1}\cup\{a\}) - w(r_{\le t-1}) \ge w(r_{\le t-1}\cup\{h\}) - w(r_{\le t-1})$.
This means that
$$
\sum_{i=1}^{s} \frac{1}{p^{a_i(t)+1}} \ge \sum_{i\in N'} \frac{1}{p^{a_i(t)+1}} > \frac{n'}{p^{\lambda +1}},
$$
and thus we have
\begin{align*}
&T(t) - T(t+1) = \sum_{i=1}^{s}\Big(\frac{1}{p^{a_i(t)}} - \frac{1}{p^{a_i(t)+1}}\Big) \\
 &\quad = \sum_{i=1}^{s} \frac{p-1}{p^{a_i(t)+1}} = (p-1) \sum_{i=1}^{s} \frac{1}{p^{a_i(t)+1}} \\
 &\quad >  (p-1)\frac{n'}{p^{\lambda+1}} \geq n \frac{\alpha(p-1)}{p^{\lambda+1}} 
\geq \frac{n}{k} \text{.}
\end{align*}
Thus, we obtain
\begin{align*}
T(k+1) < T(1) - k \cdot \frac{n}{k} = n - n = 0 \textrm{.}
\end{align*}
We reached a contradiction, which completes the proof.
\end{proof}

Theorem~\ref{thm:pgeometric} establishes 
a linear relationship between the proportion of the group $\alpha$ and the guarantee 
$\kappa(\alpha, \lambda)$. Thus, our bound for the $p$-geometric rule
is better than our bounds for Phragm\'{e}n's rule and SeqPAV
from Theorems~\ref{thm:phragmen} and~\ref{thm:seq-pav}, respectively. Further, as suggested 
by Example~\ref{ex:lower_bound}, a linear relationship is the best we can hope for. 
Unfortunately, as a tradeoff we obtain 
an exponential relationship between the required amount of representation $\lambda$ 
and the guarantee $\kappa(\alpha, \lambda)$. Nevertheless, 
Theorem~\ref{thm:pgeometric} shows that
if we are only interested in optimizing 
$\kappa(\alpha, \lambda)$ for a \emph{constant} value of $\lambda > 1$, the $\big(\frac{\lambda+1}{\lambda}\big)$-geometric rule 
provides very good guarantees for group representation.


\begin{corollary}\label{cor:geometric}
For a constant $\lambda\in{\mathbb N}$, 
the $\big(\frac{\lambda+1}{\lambda}\big)$-geometric rule satisfies 
$\kappa(\alpha, \lambda)$-group representation for $\kappa(\alpha, \lambda) = \lceil\frac{e(\lambda+1)}{\alpha}\rceil$.
\end{corollary}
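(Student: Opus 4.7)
The plan is to derive this directly from Theorem~\ref{thm:pgeometric} by plugging in the specific value $p = \frac{\lambda+1}{\lambda}$ and then bounding the resulting expression using a standard estimate for $(1 + 1/\lambda)^\lambda$.

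First, I would observe that $\kappa$-group representation is monotone in $\kappa$: if a rule satisfies $\kappa'$-GR and $\kappa(\alpha,\lambda) \ge \kappa'(\alpha,\lambda)$ for all $(\alpha,\lambda)$, then the rule also satisfies $\kappa$-GR (since $\avg(N', r_{\le k})$ is non-decreasing in $k$). So it suffices to show that the bound from Theorem~\ref{thm:pgeometric} instantiated at $p = \frac{\lambda+1}{\lambda}$ is at most $\lceil\frac{e(\lambda+1)}{\alpha}\rceil$.

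Next, I would carry out the substitution. With $p = \frac{\lambda+1}{\lambda}$ we have $p - 1 = \frac{1}{\lambda}$, and
\[
\frac{p^{\lambda+1}}{\alpha(p-1)} \;=\; \frac{\lambda \bigl(1 + \tfrac{1}{\lambda}\bigr)^{\lambda+1}}{\alpha}
\;=\; \frac{(\lambda+1)\bigl(1 + \tfrac{1}{\lambda}\bigr)^{\lambda}}{\alpha}.
\]
Invoking the classical inequality $(1 + \nicefrac{1}{\lambda})^{\lambda} \le e$ (which holds for every positive integer $\lambda$), this is at most $\frac{e(\lambda+1)}{\alpha}$, hence
\[
\Bigl\lceil \tfrac{p^{\lambda+1}}{\alpha(p-1)} \Bigr\rceil \;\le\; \Bigl\lceil \tfrac{e(\lambda+1)}{\alpha} \Bigr\rceil.
\]
Combining with Theorem~\ref{thm:pgeometric} and the monotonicity observation completes the proof.

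There is no real obstacle here; the corollary is a clean specialization of Theorem~\ref{thm:pgeometric}. The only thing to be careful about is the direction of the inequality for $(1+1/\lambda)^\lambda$ — one wants the upper bound $\le e$ (not the lower bound obtained from $(1+1/\lambda)^{\lambda+1} \ge e$), and this is precisely what the factoring above isolates, so the estimate lines up in the right direction.
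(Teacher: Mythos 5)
Your proposal is correct and follows essentially the same route as the paper's own proof: substitute $p = \frac{\lambda+1}{\lambda}$ into Theorem~\ref{thm:pgeometric} and bound $\bigl(1+\nicefrac{1}{\lambda}\bigr)^{\lambda}\le e$ after factoring out one copy of $\frac{\lambda+1}{\lambda}$. Your explicit remarks on the monotonicity of $\kappa$-GR in $\kappa$ and on the direction of the $e$-inequality are sound and merely make explicit what the paper leaves implicit.
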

\begin{proof}
For $p = \frac{\lambda+1}{\lambda}$ the formula for $\kappa(\alpha, \lambda)$ 
from Theorem~\ref{thm:pgeometric} can be rewritten as follows:
\begin{align*}
\frac{p^{\lambda+1}}{\alpha(p-1)} = 
\frac{\big(\frac{\lambda+1}{\lambda}\big)^{\lambda+1}}{\alpha\big(\frac{\lambda+1}{\lambda}-1\big)} 
\leq \frac{e \cdot \frac{\lambda+1}{\lambda}}{\alpha\cdot \frac{1}{\lambda}} = \frac{e(\lambda+1)}{\alpha} \text{.}
\end{align*}
\end{proof}

For Reverse SeqPAV, we have not been able to establish an analogue of Theorems~\ref{thm:phragmen}, \ref{thm:seq-pav}
and~\ref{thm:pgeometric}.
However, we can obtain a bound on $\kappa(\alpha, \lambda)$ for each $\alpha\in(0, 1]$ and for \emph{some} sufficiently 
large $\lambda$.

Our proof is based on two observations, formalized as 
Lemma~\ref{lemma:rev-rav1}~and~\ref{lemma:rev-rav2}, below.

\begin{lemma}\label{lemma:rev-rav1}
Consider a group of voters, $N'$, and a set of alternatives, $A'\subseteq A$, with a total approval score of the voters $N'$ from 
$A'$ equal to $z$, i.e., $\sum_{i\in N'}|A_i\cap A|=z$. Assume that Reverse SeqPAV removes one alternative from $A'$ and this 
results in decreasing the total approval score of $N'$ to $z' < z$. Then it holds that removing any alternative 
decreases the PAV-score by at least $\frac{(z - z')^2}{z}$.
\end{lemma}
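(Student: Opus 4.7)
The strategy is to write the PAV-score drop of a single removal in closed form and then apply the arithmetic--harmonic mean (AM--HM) direction of \eqref{eq:inequ}. Let $a \in A'$ be the alternative that Reverse SeqPAV removes. My first step would be to establish the elementary identity
\[
w_{\text{PAV}}(A') - w_{\text{PAV}}(A' \setminus \{a\}) \;=\; \sum_{i \in N_a} \frac{1}{|A_i \cap A'|},
\]
which is immediate from the definition of $w_{\text{PAV}}$: a voter $i \notin N_a$ is unaffected, while a voter $i \in N_a$ sees $|A_i \cap A'|$ decrease by exactly one and therefore loses precisely the last harmonic term $\tfrac{1}{|A_i \cap A'|}$ from her contribution. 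Restricting this sum to $N' \cap N_a$ gives a valid lower bound on the PAV-drop.

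Next, I would relate the approval-score gap $z - z'$ to the cardinality $|N' \cap N_a|$ by the same counting argument:
\[
z - z' \;=\; \sum_{i \in N'} \bigl(|A_i \cap A'| - |A_i \cap (A' \setminus \{a\})|\bigr) \;=\; |N' \cap N_a|.
\]
Thus the restricted sum has exactly $z - z'$ positive terms, and their denominators satisfy $\sum_{i \in N' \cap N_a} |A_i \cap A'| \le \sum_{i \in N'} |A_i \cap A'| = z$.

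Finally, I would apply the AM--HM inequality from \eqref{eq:inequ} to the $z - z'$ positive numbers $\{|A_i \cap A'|\}_{i \in N' \cap N_a}$, obtaining
\[
\sum_{i \in N' \cap N_a} \frac{1}{|A_i \cap A'|} \;\ge\; \frac{(z - z')^2}{\sum_{i \in N' \cap N_a} |A_i \cap A'|} \;\ge\; \frac{(z - z')^2}{z},
\]
which, combined with the first two steps, yields the claim. I do not anticipate any serious obstacle: intuitively, once one fixes both the \emph{number} of harmonic terms being removed (namely $z - z'$) and the total \emph{mass} of their denominators (at most $z$), convexity of $1/x$ forces the minimum total loss to occur when the denominators are equal, which is precisely the AM--HM equality case. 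The only subtlety is that the derivation is uniform in $a$: it depends only on the approval drop $z - z'$ caused by the removed alternative, so the ``any alternative'' phrasing in the lemma is harmless.
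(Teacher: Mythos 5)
Your argument is essentially the paper's own proof: you identify the $z-z'$ voters of $N'$ who lose a representative, observe that their representative counts sum to at most $z$, and apply the arithmetic--harmonic mean inequality to conclude that the sum of the reciprocals is at least $\frac{(z-z')^2}{z}$. Your closed-form identity for the marginal PAV-score makes the first step more explicit than the paper does, but the core is identical.

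The one place where your reasoning goes astray is the final remark about the ``any alternative'' clause. The derivation is \emph{not} uniform in the removed alternative: for a different alternative $b$, the approval-score drop for $N'$ is some $z - z'_b$ that may be smaller than $z - z'$, and your argument would then only yield the weaker bound $\frac{(z-z'_b)^2}{z}$, which need not be at least $\frac{(z-z')^2}{z}$. The correct (and very short) justification, which the paper uses, is that Reverse SeqPAV by definition removes the alternative whose removal decreases the PAV-score \emph{the least}; hence the PAV-score drop of every other alternative is at least that of the removed one, which your AM--HM computation has already bounded below by $\frac{(z-z')^2}{z}$. Replace the uniformity remark with this minimality argument and the proof is complete.
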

\begin{proof}
Since the total satisfaction of voters from $N'$ decreases by $z - z'$, it means that there are $z - z'$ voters in $N'$ 
who, after removing the alternative, lose one of their representatives. Let us rename these voters to $i_1$, $\ldots$ 
$i_{z - z'}$. Let $a(i)$ denote the number of representatives of the $i$-th voter just before removing the alternative 
from $A'$. It holds that:
\begin{align*}
\frac{z}{z - z'} \geq \frac{\sum_{j = 1}^{z-z'} a(i_j)}{z - z'} \geq 
\frac{z - z'}{\sum_{j = 1}^{z-z'} \frac{1}{a(i_j)}}\textrm{,}
\end{align*}
where the last inequality follows from the inequality between the arithmetic and the harmonic means.
After reformulating the above inequality, we get that:
\begin{align*}
\sum_{j = 1}^{z-z'} \frac{1}{a(i_j)} \geq \frac{(z - z')^2}{z}\textrm{.}
\end{align*}
Thus, as a result of removing the alternative from $A'$, the PAV-score of the voters decreases by at least $\frac{(z - 
z')^2}{z}$. The statement of the lemma holds since Reverse SeqPAV selects an alternative 
that decreases the PAV-score of the voters the least.
\end{proof}

\begin{lemma}\label{lemma:rev-rav2}
Consider a set of alternatives, $A'\subseteq A$. Assume that Reverse SeqPAV removes an 
alternative from $A'$ and this decreases the PAV-score of the voters by $\Delta$. It holds that $|A| \leq \nicefrac{n}{\Delta}$.
\end{lemma}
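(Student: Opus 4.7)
The plan is to bound the total PAV-decrease summed over hypothetical single-alternative removals from $A'$ in two different ways. Let $\Delta_a$ denote the drop in PAV-score caused by removing a single alternative $a\in A'$ from $A'$. Since Reverse SeqPAV, when operating on $A'$, chooses an alternative realizing the \emph{minimum} such drop, namely $\Delta$, we must have $\Delta_a\ge\Delta$ for every $a\in A'$. Summing over $A'$ yields the lower bound $\sum_{a\in A'}\Delta_a\ge |A'|\,\Delta$.

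Next, I would compute $\Delta_a$ explicitly. Removing $a$ from $A'$ only affects voters $i$ with $a\in A_i$; for each such voter the number of approved alternatives drops from $|A_i\cap A'|$ to $|A_i\cap A'|-1$, so the voter's PAV contribution decreases by exactly the harmonic increment $1/|A_i\cap A'|$. Hence
\[
\Delta_a \;=\; \sum_{i\in N:\, a\in A_i\cap A'}\frac{1}{|A_i\cap A'|}.
\]

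The key step is to swap the order of summation, which collapses the expression:
\[
\sum_{a\in A'}\Delta_a \;=\; \sum_{i\in N}\sum_{a\in A_i\cap A'}\frac{1}{|A_i\cap A'|} \;=\; \sum_{i\in N:\,A_i\cap A'\neq\emptyset}1 \;\le\; n.
\]
Combining this with the earlier lower bound gives $|A'|\,\Delta\le n$, i.e., $|A'|\le n/\Delta$, which is the claimed inequality (reading the statement as $|A'|$ rather than $|A|$).

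There is no real obstacle here; the only nontrivial observation is the double-counting identity, which shows that the sum of local PAV-decreases over all candidate removals from $A'$ telescopes, via swapping sums, into the count of voters with at least one approved alternative in $A'$, which is trivially at most $n$.
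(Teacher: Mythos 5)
Your proof is correct and is essentially the same argument as the paper's: both lower-bound each candidate removal's PAV-decrease by $\Delta$ via minimality, write that decrease as $\sum_{i:\,a\in A_i\cap A'} 1/|A_i\cap A'|$, and swap the order of summation so the total telescopes to at most $n$. You are also right that the statement should read $|A'|\le n/\Delta$; the paper's own proof (and its later application with $A'=r_{\le k}$) establishes exactly that bound.
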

\begin{proof}
Let $a(i)$ denote the number of representatives of the $i$-th voter just before removing the alternative from $A'$. 
Consider an alternative $y \in A$. Let $i_1$, $\ldots$, $i_{p}$ be the voters that approve of $y$. Since Reverse SeqPAV
selects an alternative that decreases the PAV-score of the voters the least, it 
holds that:
\begin{align*}
\sum_{j = 1}^p \frac{1}{a(i_j)} \geq \Delta.
\end{align*}
Summing up the left sides of the above inequality over all alternatives from $A'$ we get that:
\begin{align*}
\sum_{i \in N} \sum_{j = 1}^{a(i)} \frac{1}{a(i)} \geq |A|\Delta.
\end{align*}
Which gives that:
\begin{align*}
\sum_{i \in N} \sum_{j = 1}^{a(i)} \frac{1}{a(i)} = \sum_{i \in N} 1  = n \geq |A|\Delta,
\end{align*}
which completes the proof.
\end{proof}

\begin{theorem}\label{thm:rev-seq-pav}
Let $\alpha \in (0, 1]$, $\lambda \in \naturals$, 
and let $N' \subseteq N$ be an $(\alpha,\lambda)$-significant group.
Let $r$ be the ranking returned by Reverse SeqPAV.
Then, there exists $y \geq \lambda$ such that
$\avg(N', r_{\le y/\alpha})\ge y$.
\end{theorem}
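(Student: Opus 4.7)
My plan is to combine Lemmas~\ref{lemma:rev-rav1} and~\ref{lemma:rev-rav2} at a single cleverly chosen step of Reverse SeqPAV, avoiding any global potential-function bookkeeping. Since $\coh(N')\ge\lambda$, I will fix a set $B\subseteq A$ of $\lambda$ alternatives approved by every voter in $N'$, and let $p$ be the largest position in $r$ occupied by an element of $B$---equivalently, the round at which Reverse SeqPAV first removes a member of~$B$. Throughout I will write $\bar\rho(k)$ for $\sum_{i\in N'}|A_i\cap r_{\le k}|$, so that $\avg(N',r_{\le k})=\bar\rho(k)/n'$.

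The argument will split into two cases. If $p\le \lambda/\alpha$, then all of $B$ lies in $r_{\le \lambda/\alpha}$, and hence $\avg(N',r_{\le \lambda/\alpha})\ge \lambda$, so the theorem holds with $y=\lambda$. If instead $p>\lambda/\alpha$, I will analyze the step at which the element of $B$ at position $p$ is deleted. At that step the other $\lambda-1$ cohesive alternatives are still present, so the drop in $N'$'s aggregate approval equals exactly $n'$. Plugging $z=\bar\rho(p)$ and $z-z'=n'$ into Lemma~\ref{lemma:rev-rav1} will lower-bound the resulting PAV-score drop by $(n')^2/\bar\rho(p)$, while Lemma~\ref{lemma:rev-rav2} applied with $|A'|=p$ upper-bounds the same drop by $n/p$. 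Combining the two inequalities and using $n'\ge\alpha n$ yields $\bar\rho(p)\ge\alpha p\,n'$, and taking $y=\alpha p$ then gives both $y\ge\lambda$ (since $p>\lambda/\alpha$) and $\avg(N',r_{\le y/\alpha})=\bar\rho(p)/n'\ge\alpha p=y$, as required.

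The main thing to be careful about is verifying, in the second case, that the aggregate approval of $N'$ really drops by exactly $n'$ at the analyzed step. This is where the choice of $p$ as the \emph{largest} position at which an element of $B$ appears is essential: it guarantees that none of the $\lambda$ cohesive alternatives has been removed before step $p$, so that the removed alternative is solely responsible for the drop. Once this is pinned down, the remainder is a one-line calculation built directly on the two lemmas, and no induction or potential-function argument is needed.
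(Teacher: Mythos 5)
Your proof is correct and follows essentially the same route as the paper: both identify the step at which Reverse SeqPAV first removes a unanimously approved alternative (equivalently, the largest position $p$ occupied by one), apply Lemma~\ref{lemma:rev-rav1} with approval drop $n'$ to lower-bound the PAV-score decrease, apply Lemma~\ref{lemma:rev-rav2} to bound $p$, and combine the two bounds; the paper simply sets $y=\avg(N',r_{\le p})$ directly, which makes your explicit case split unnecessary but harmless. One small remark: the drop of exactly $n'$ at that step follows solely from the removed alternative being approved by every voter in $N'$, irrespective of whether the other $\lambda-1$ cohesive alternatives are still present; the genuine reason to take $p$ maximal is to guarantee $y=\alpha p\ge\lambda$ in your second case (and to make your first case go through).
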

\begin{proof}
Consider a set of voters 
$N' \subseteq N$ with $|N'|=n'$, $\coh(N')\ge \lambda$. 
Let $W$, $|W|=\lambda$, be a set of alternatives approved by all members of $N'$.
Let us consider the moment when Reverse SeqPAV takes the first 
alternative from $W$ and puts it in position $k$ in the ranking. 
Let $y$ denote the average satisfaction of the voters, at this point; naturally, $y \geq \lambda$.

The total approval score of the voters in $N'$ gained from $r_{\le k}$ is at least equal to $n'y$, 
and by Lemma~\ref{lemma:rev-rav1} we infer 
$w_{\textrm PAV}(r_{\le k}) - w_{\textrm PAV}(r_{\le k}) \ge 
\frac{(n')^2}{n'y} = \frac{n'}{y}$. By Lemma~\ref{lemma:rev-rav2} with $A'=r_{\le k}$, we get 
that $k \leq \frac{ny}{n'} \leq \frac{y}{\alpha}$ and since $k$ is an interger $k \le \lfloor y/\alpha\rfloor$. This completes the proof.
\end{proof}


\section{Experimental Evaluation of Ranking Rules}

The results from the previous section provide worst-case guarantees for several interesting ranking rules. We would now like to complement these results with upper bounds, i.e., observed ``violations'' of the justified demand of voter groups. To this end, we consider a large number of synthetic and real-world preference data sets and analyze the representation offered by ranking rules. 
In order to illuminate strengths and weaknesses of the ranking rules, the following very diverse probability distributions and data sets are considered.
In total, our experiments are based on 315,500 instances.

\begin{description}
\item[Random subsets.] In this model, votes are random subsets of the set of alternatives with 
the number of alternatives ranging from $4$ to $14$ and the number of voters ranging from $3$ to $300$.
We distinguish small profiles with $4\leq m\leq 6$ and $3\leq n \leq 10$, and large profiles with $9\leq m\leq 14$ and $20\leq n \leq 300$.

\item[Spatial Model with Districts.] In this model, we represent voters and alternatives as points in two-dimensional Euclidean space $[0, 1] \times [0, 1]$. In this space, we first place three \emph{districts} by randomly selecting a center point for each district.
Each district defines a Gaussian distribution over $[0, 1] \times [0, 1]$, centered at the district center point with a standard deviation of 0.2 in both dimensions. For each district, we then sample a number of points representing voters and alternatives according to this distribution. Each voter approves of all alternatives within a radius of 0.4.


\item[Urn Model.] We consider $9 \le m \le 15$ alternatives and 200 to 600 voters. For each voter, we sample a \emph{ranking} according to the Polya--Eggenberger Urn Model~\citep{bpublicchoice85}, and turn this ranking into an approval set by letting the voter approve the first 5 to 8 alternatives of the ranking. The rankings are sampled as follows. Consider an urn that initially contains each of the $m!$ possible rankings. The first voter's ranking is picked uniformly at random from the urn. To achieve some correlation between the rankings of different voters, we then insert~$b$ copies ($b>1$) of the selected ranking into the urn (we use $b = 0.05 \cdot m!$). Then the second voter's ranking is picked from the urn, again $b$ copies are added to it, and so on.

\item[Two groups.] In this model, we randomly assign voters into two groups, where all members of the same group approve the same alternatives. The approved alternatives of these two groups may overlap.

\item[Real-world data.] We consider 346 real-world preference profiles from PrefLib~\citep{mat-wal:c:preflib} consisting of rankings with ties with $m\leq 25$ and $n\leq 2000$. 
The approval sets of voters consist of a certain number of their top-ranked alternatives (the concrete number of approved alternatives in most cases varies between 1 and $\nicefrac{m}{2}$; sometimes it exceeds $\nicefrac{m}{2}$ due to ties).
\end{description}

\begin{figure}[ht!]
\begin{minipage}[h]{0.31\linewidth}
  \centering
  \includegraphics[width=\textwidth]{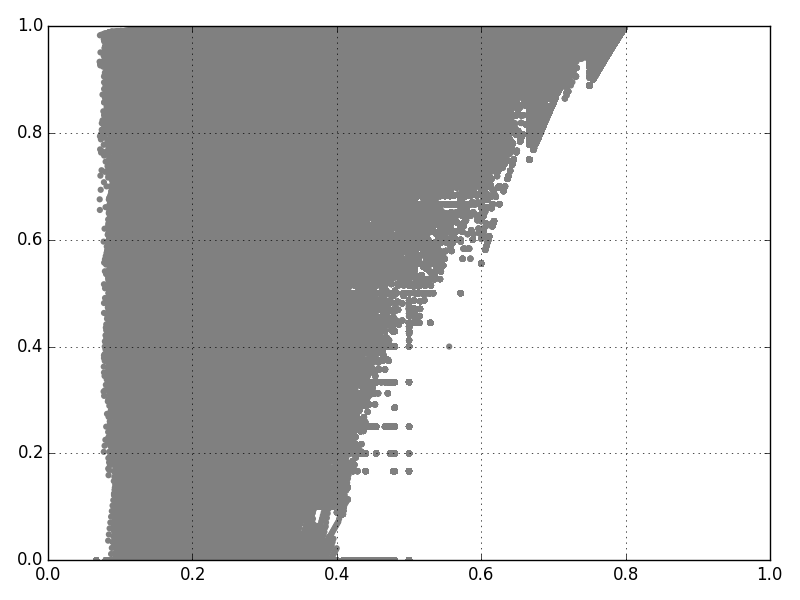}
  AV
\end{minipage}
\hspace{0.1cm}
\begin{minipage}[h]{0.31\linewidth}
  \centering
  \includegraphics[width=\textwidth]{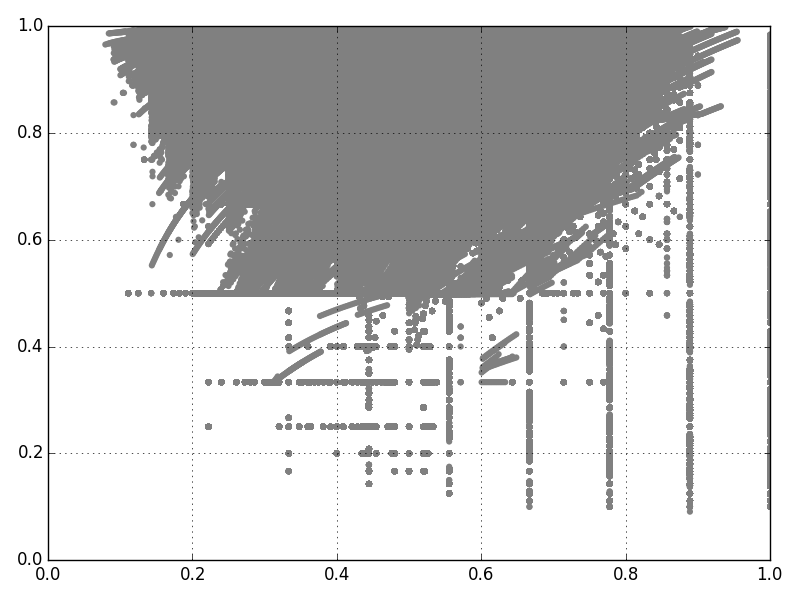}
  Greedy Chamberlin-Courant
\end{minipage}
\hspace{0.1cm}
\begin{minipage}[h]{0.31\linewidth}
  \centering
  \includegraphics[width=\textwidth]{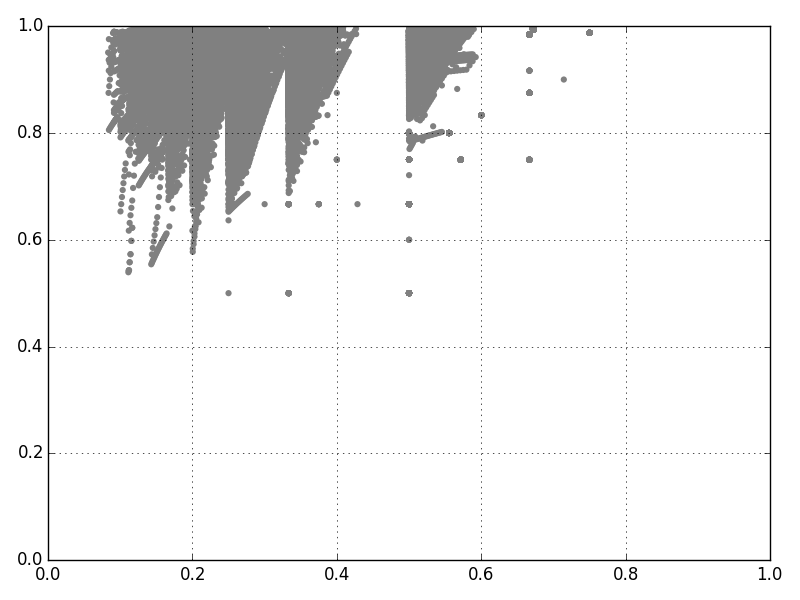}
  Phragm\'{e}n
\end{minipage}
  \vspace{0.2cm}

\begin{minipage}[h]{0.31\linewidth}
  \centering
  \includegraphics[width=\textwidth]{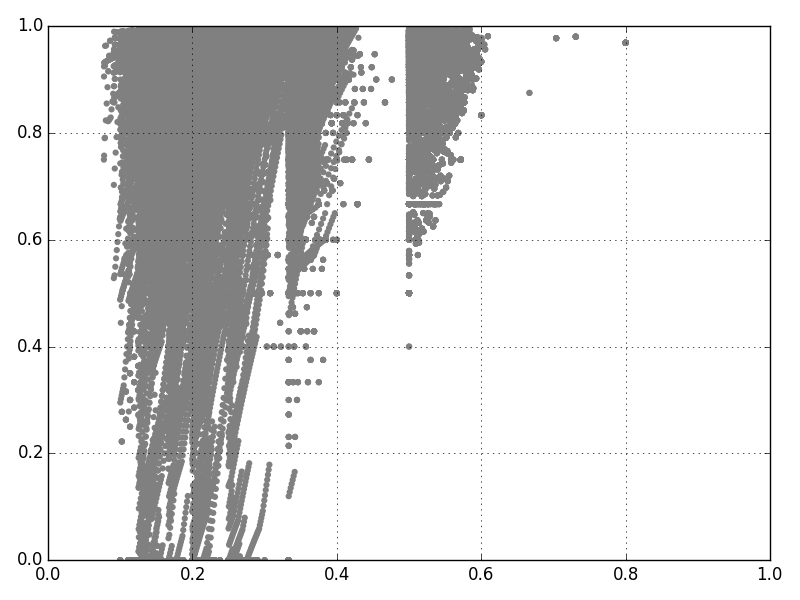}
  $\nicefrac{5}{4}$-Geometric RAV
\end{minipage}
\hspace{0.1cm}
\begin{minipage}[h]{0.31\linewidth}
  \centering
  \includegraphics[width=\textwidth]{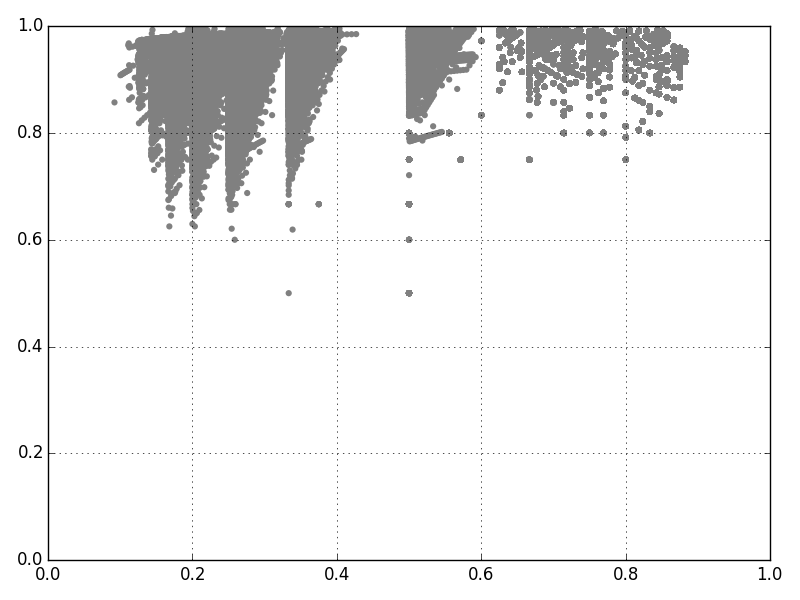}
  $2$-Geometric RAV
\end{minipage}
\hspace{0.1cm}
\begin{minipage}[h]{0.31\linewidth}
  \centering
  \includegraphics[width=\textwidth]{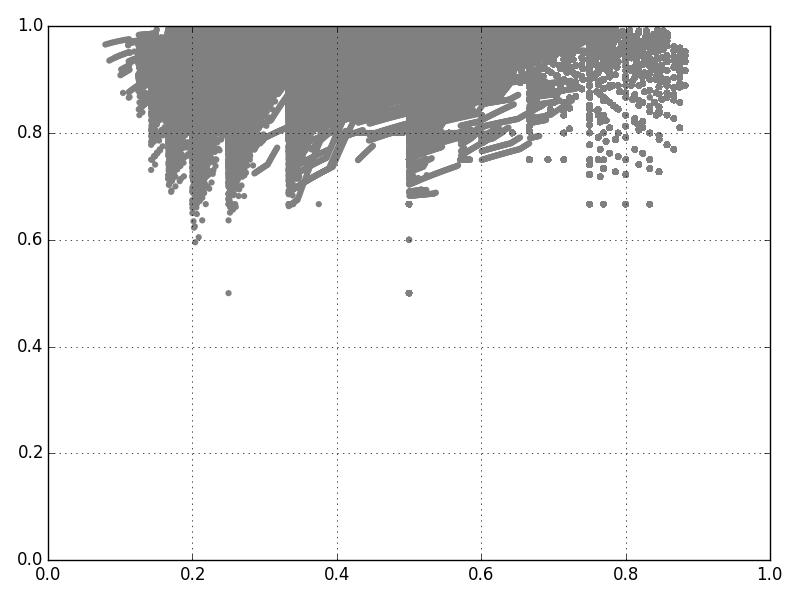}
  $10$-Geometric RAV
\end{minipage}

  \vspace{0.2cm}

\begin{minipage}[h]{0.31\linewidth}
  \centering
  \includegraphics[width=\textwidth]{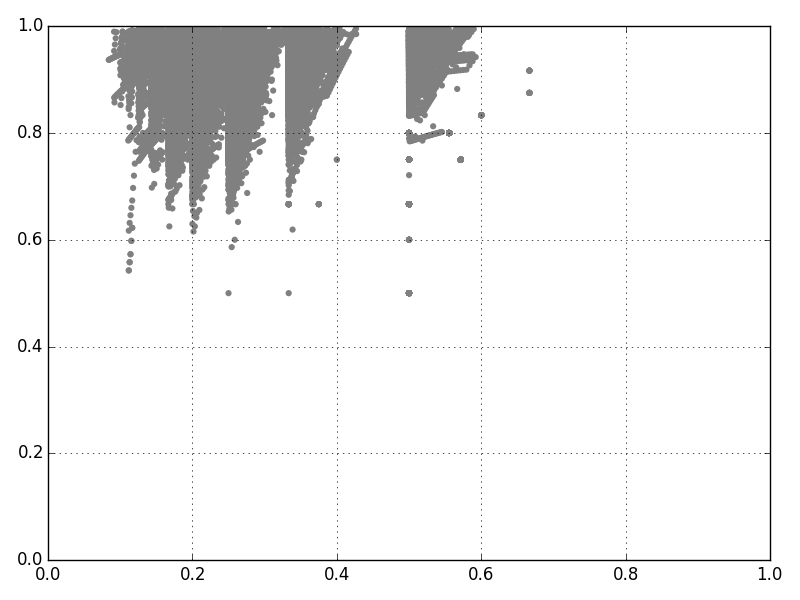}
  SeqPAV
\end{minipage}
\hspace{0.1cm}
\begin{minipage}[h]{0.31\linewidth}
  \centering
  \includegraphics[width=\textwidth]{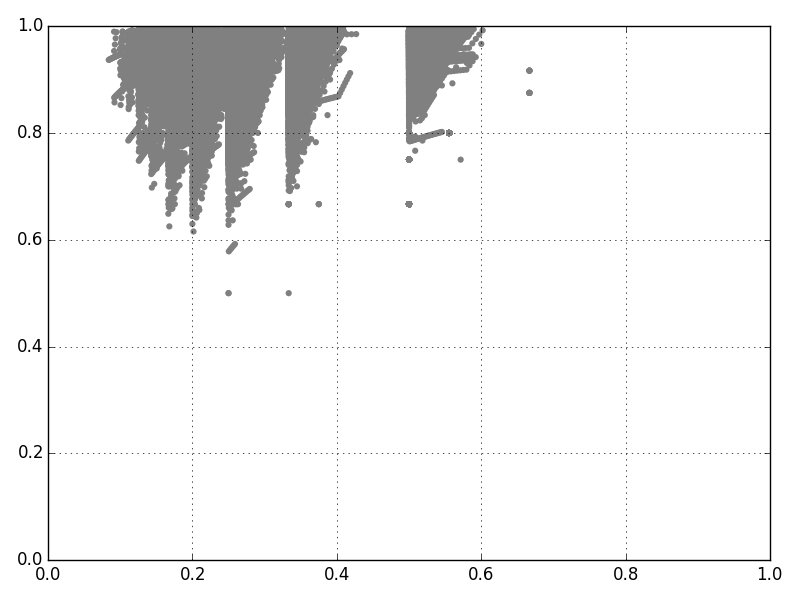}
  Reverse SeqPAV
\end{minipage}
\hspace{0.1cm}
\begin{minipage}[h]{0.31\linewidth}
  \centering
  \includegraphics[width=\textwidth]{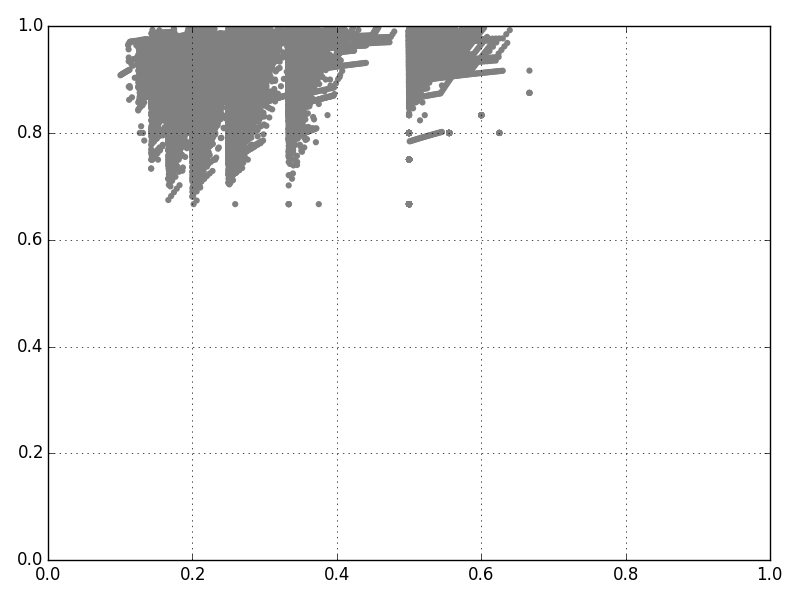}
  Best-of
\end{minipage}

\caption{Violations of justified demand encountered in our data sets. The x-axis shows the proportion of the respective group of voters, $\alpha(N')$, the y-axis the quotient of average representation and justified demand, $y_r(N',k)$. Roughly, ranking rules with more gray points perform worse according to our metrics; lower points correspond to more severe violations of proportionality. Points to the left correspond to small groups, points to the right correspond to large groups with unmet justified demand.}
\label{fig:scatter}
\end{figure}

\subsubsection{Measures of Quality of Group Representation} 

In our experiments we record every \emph{violation} of justified demand:
If, for a ranking $r$ and an integer $k\leq m$, there exists a group $N'$ with 
$\avg(N',r_{\le k}) < \jar(N',k)$, we plot a point at $(\alpha(N'), y_r(N',k))$ indicating this violation, 
where $y_r(N',k)=\frac{\avg(N',r_{\le k})}{\jar(N',k)}$.
Figure~\ref{fig:scatter} shows these plots for different ranking rules.
Violations displayed in the lower part of the plots have a small ratio $y_r$ and thus are more severe.
Note that several points may originate from the same ranking, and that different rankings may produce the same point. Hence, these plots do not display \emph{how often} violations occur but rather \emph{in which regions} (small/large groups, minor/major violations) violations have been recorded.

\subsubsection{Results of the Simulations}
Let us start by analysing the plots of Figure~\ref{fig:scatter}.
Approval Voting (AV) and greedy Chamberlin--Courant (greedy CC) do not do well:
while AV at least provides reasonable representation for large groups (consistent with \thmref{thm:av}), greedy CC produces violations all across the spectrum.
This is not too surprising since greedy CC only cares about representing each voter by a single alternative, and selects alternatives arbitrarily once this is achieved.
We consider three geometric RAV rules, for values of $p$ in $\{\nicefrac{5}{4}, 2, 10\}$.  
The $\nicefrac{5}{4}$-geometric RAV rule has characteristics similar to AV (the second and third approved alternatives count almost as much as the first), while the $10$-geometric RAV rule is similar to greedy CC (the first approved alternative counts the most).
The $2$-Geometric RAV performs best, together with SeqPAV, Reverse SeqPAV, and Phragm\'{e}n's rule; it is hard to visually compare these rules with each other.
We also added a ``best-of'' rule, which selects whatever ranking $r$ has the highest quality $q_P(r)$ out of those rankings generated by our rules (but note that this is not the optimal ranking according to \defref{def:opt}).

The best rules according to~\ref{fig:scatter} are thus 2-Geometric RAV, SeqPAV, Reverse SeqPAV, and Phragm\'{e}n's rule. In which contexts should we prefer to use each rule? To answer this question, it is useful to compare their performance on instances obtained by the different distributions and data sets we employed.
First of all, all rankings produced by these rules have a similar quality in terms of worst-case violations: the worst violations of all these rules were for $y_r(N',k)=0.5$.
Consequently, for all instances considered these rules provided $q_P(r)\geq 0.5$.
A more discriminating measure is the percentage of instances (of a given distribution) that have been solved perfectly, i.e., without violations.
The results are summarized in Table~\ref{tab:statistics}. 
Notably, 2-geometric RAV achieves perfection most frequently among the studied rules in four categories: real-world instances, large profiles with random subsets, the urn model, and the spatial model.
Its performance in the spatial model is exceptional: it solves 82.7\% of the instances without violations; the runner-up is SeqPAV with 60.0\%.
The weakest category for 2-geometric RAV is ``two groups'', where it solves 87.9\% without violations; SeqPAV and Reverse SeqPAV solve all instances obtained from this distribution without violations.
For small profiles in the random subset category, Phragm\'{e}n's rule and Reverse SeqPAV perform best (99.3\%), whereas SeqPAV and 2-Geometric RAV perform slightly worse (99.2\% and 99.1\%, respectively).


\begin{table*}
\centering
\label{my-label}
\begin{tabular}{lrrrrrrr}
\toprule
                                & real-world      & small           & large            & urn             & spatial          & 2 groups        & $\max \alpha$    \\
\midrule
SeqPAV                          & 5.0 \%          & 0.8 \%          & 34.3 \%          & 10.0 \%         & 40.0 \%          & \textbf{0.0 \%} & \textbf{0.67} \\
AV                              & 9.8 \%          & 4.6 \%          & 40.0 \%          & 16.2 \%         & 79.5 \%          & 82.9 \%         & 0.80          \\
$\nicefrac{5}{4}$-geom.\ RAV & 6.6 \%          & 1.5 \%          & 36.1 \%          & 13.0 \%         & 62.8 \%          & 22.4 \%         & 0.80          \\
$2$-geom.\ RAV               & \textbf{4.9 \%} & 0.9 \%          & \textbf{33.8 \%} & \textbf{8.1 \%} & \textbf{17.3 \%} & 12.1 \%         & 0.88          \\
$10$-geom.\ RAV              & 11.6 \%         & 0.9 \%          & 42.3 \%          & 29.9 \%         & 72.5 \%          & 13.9 \%         & 0.88          \\
Phragm\'en                   & 6.0 \%          & \textbf{0.7 \%} & 35.5 \%          & 9.7 \%          & 42.7 \%          & 0.2 \%          & 0.75          \\
Rev.\ SeqPAV                  & 5.0 \%          & \textbf{0.7 \%} & 36.9 \%          & 10.7 \%         & 41.2 \%          & \textbf{0.0 \%} & \textbf{0.67} \\
Greedy CC                       & 42.6 \%         & 20.4 \%         & 47.1 \%          & 68.3 \%         & 91.3 \%          & 69.8 \%         & 1.00          \\
best-of                         & 4.0 \%          & 0.3 \%          & 30.2 \%          & 6.3 \%          & 13.3 \%          & 0.0 \%          & 0.67     \\     
\bottomrule
\end{tabular}
\caption{Percentage of profiles with $q_P(r)<1$ for different categories of datasets: Random Subsets with small profiles (\emph{small}), Random Subsets with large profiles (\emph{large}), Urn Model (\emph{urn}), Spatial Model (\emph{spatial}), and Two Groups (\emph{2-groups}). Column $\max \alpha$ shows the proportion $\alpha(N')$ of the largest group $N'$ with unfulfilled justifiable demand.}
\label{tab:statistics}
\end{table*}

The main strength of (Reverse) SeqPAV is the quality of their ranking for large groups: neither of them has any violations for groups $N'$ with $\alpha(N')> \nicefrac{2}{3}$.
For Phragm\'{e}n's rule this value is $0.8$ and for 2-Geometric RAV it is $0.88$ (it is also visible in Figure~\ref{fig:scatter} that 2-Geometric RAV has more violations for large groups than, e.g., SeqPAV).

In conclusion, our experiments indicate that (i) $2$-Geometric RAV, SeqPAV, Reverse SeqPAV, and Phragm\'{e}n's rule are the best-suited rules to generate proportional rankings among those considered, and 
(ii) there is no single best among these four rules (the best-of rule outperforms all of them).
Unfortunately, the best-of rule is certainly not practical, as it is very expensive to compute~$q_P(r)$.
Further experiments and theoretical results are required to determine which (polynomial-time computable) rule is the best choice (for a given data set).

\section{Conclusions}

In this paper, we have formalized a fundamental problem that appears in many real-life applications: \emph{proportional rankings} can provide diversified search results, can accommodate different types of users in recommendation systems, can support decision-making processes under liquid democracy, and can even produce committees with an internal hierarchical structure. Our formalization of this problem allows us to leverage classical techniques from social choice and political science to these modern application scenarios, and shine a new light on voting rules introduced as far back as the 19th century. 

After evaluating the proportionality of several appealing ranking rules both theoretically and experimentally, we identified four such rules that appear to perform very well in this area: $2$-Geometric Reweighted Approval Voting, Sequential Proportional Approval Voting and its reverse variant, and Phragm\'{e}n's rule. However, none of these rules is single-best, and there remains a need for an in-depth analysis to determine which rule is most applicable in which situation.

While all four of these rules are polynomial-time computable,
we have shown that the optimal rule (i.e. the rule that outputs rankings maximizing the quality measure $q_P$) is NP-hard to compute. It would be desirable to develop ways in which this rule can be computed in reasonable time for practical instances, and to search for other ranking rules that might provide an even better approximation to the optimal rule than the rules we have identified in this work.

\bibliographystyle{aaai}
\bibliography{../../bib/abb,../../bib/group,../../bib/main}

\end{document}